\newtheorem{definition}{Definition}
\newtheorem{example}[definition]{Example}
\newtheorem{theorem}[definition]{Theorem}
\newtheorem{lemma}[definition]{Lemma}
\newtheorem{remark}[definition]{Remark}
\newtheorem{corollary}[definition]{Corollary}
\newtheorem{proposition}[definition]{Proposition}
\newtheorem{assumption}[definition]{Assumption}
\newcommand{\widesim}[2][1.5]{
	\mathrel{\overset{#2}{\scalebox{#1}[1]{$\sim$}}}
}
\def\XXint#1#2#3{{\setbox0=\hbox{$#1{#2#3}{\int}$}
		\vcenter{\hbox{$#2#3$}}\kern-.5\wd0}}
\DeclareMathOperator{\Res}{Res}
\begin{document}

\title[On the $x$-$y$ Symmetry of Correlators in Topological Recursion via Loop Insertion Operator]{On the $x$-$y$ Symmetry of Correlators in Topological Recursion via Loop Insertion Operator}

\author{Alexander Hock}

\address{Mathematical Institute, University of Oxford, Andrew Wiles Building, Woodstock Road,
	OX2 6GG, Oxford, UK \\
{\itshape E-mail address:} \normalfont  
\texttt{alexander.hock@maths.ox.ac.uk}}

\maketitle

\markboth{\hfill\textsc\shortauthors}{\textsc{On the $x$-$y$ Symmetry in Topological Recursion via Loop Insertion Operator}\hfill}


\begin{abstract}
	  Topological Recursion generates a family of symmetric differential forms (correlators) from some initial data $(\Sigma,x,y,B)$. We give a functional relation between the correlators of genus $g=0$ generated by the initial data $(\Sigma,x,y,B)$ and by the initial data $(\Sigma,y,x,B)$, where $x$ and $y$ are interchanged. The functional relation is derived with the loop insertion operator by computing a functional relation for some intermediate correlators. Additionally, we show that our result is equivalent to the recent result of \cite{Borot:2021thu} in case of $g=0$. Consequently, we are providing a simplified functional relation between generating series of higher order free cumulants and moments in higher order free probability.
\end{abstract}

\section{Introduction}
Topological Recursion (TR), invented in 2007 by Chekhov, Eynard and Orantin \cite{Chekhov:2006vd,Eynard:2007kz},  aroused more and more interest in last few years. It is a universal recursive procedure to generate a family of correlators. Depending on the initial data, the \textit{spectral curve}, which gives the starting point of the recursion, TR has various applications in different areas of mathematics and mathematical physics, we refer to \cite{Eynard:2014zxa} for a short overview. 

Important invariants of TR are the so-called symplectic invariants $\mathcal{F}^{(g)}$, which are conjectured to be invariant under all symplectic transformations of the spectral curve. However, there is one striking transformation, the $x$-$y$ interchange, where the invariance is not proved in its full generality. In this article, we are interested on the implication of the $x$-$y$ interchange on the correlators themselves, which are not invariant at all. We will give relations between the planar correlators under $x$-$y$ interchange using the loop insertion operator, and confirm them by some examples.

There are two established examples, which should be kept in mind, where the $x$-$y$ transformation generates correlators of combinatorial meaning:
\begin{itemize}
	\item The hermitian 2-matrix model \cite{Chekhov:2006vd,Eynard:2007nq}, where the $x$-$y$ transformation interchanges the colours or equivalently the potentials $V_1$ and $V_2$.
	\item The Hermitian 1-matrix model, where the $x$-$y$ transformation relates ordinary maps to fully simple maps \cite{Borot:2021eif}. Consequently, this has a further application to free probability (see \cite{Garcia-Failde:2019iuf} for some discussions).
\end{itemize}

To be more precise, we are now defining the correlators of
TR. Let $\omega^{(g)}_{n,0}$ (the so-called correlators) be a family of symmetric
meromorphic differentials on $n$ products of Riemann surfaces $\Sigma$.
These $\omega^{(g)}_{n,0}$ are labeled by the genus $g$ and the number $n$
of marked points of a compact complex curve. These objects occur as correlators
of algebraic curves $E(x,y)=0$, understood in parametric
representation $x(z)$ and $y(z)$. For simplicity, we will assume that the algebraic curve is of genus zero, which means that a rational parametrisation $x(z)$ and $y(z)$ exists.

From the initial  data, the \textit{spectral curve} $(\Sigma,x,y,B)$, consisting of a ramified covering
$x: \Sigma \to \Sigma_0$ of Riemann surfaces, 
a differential 1-form $\omega^{(0)}_{1,0}(z)=y(z)dx(z)$  and the \emph{Bergman kernel}
$\omega^{(0)}_{2,0}(z_1,z_2)=B(z_1,z_2)=\frac{dz_1\,dz_2}{(z_1-z_2)^2}$,
TR constructs the meromorphic differentials $\omega^{(g)}_{n+1,0}(z_1,...,z_n,z)$ with
$2g+n\geq 2$ via the following universal formula
(in which we abbreviate $I=\{z_1,...,z_n\}$):
\begin{align}
	\label{eq:trx}
	& \omega^{(g)}_{n+1,0}(I,z)
	\\
	& =\sum_{\alpha_i}
	\Res\displaylimits_{q\to \alpha_i}
	K_i(z,q)\bigg(
	\omega^{(g-1)}_{n+2,0}(I, q,\sigma_i(q))
	+\hspace*{-1cm} \sum_{\substack{g_1+g_2=g\\ I_1\uplus I_2=I\\
			(g_1,I_1)\neq (0,\emptyset)\neq (g_2,I_2)}}
	\hspace*{-1.1cm} \omega^{(g_1)}_{|I_1|+1,0}(I_1,q)
	\omega^{(g_2)}_{|I_2|+1,0}(I_2,\sigma_i(q))\!\bigg)\;.
	\nonumber
\end{align}
This construction proceed recursively 
in the negative Euler characteristic $-\chi=2g+n-2$. Further, we need to define:
\begin{itemize}
	 \item The sum over the \textit{ramification points} $\alpha_i$ of the ramified
	covering  $x:\Sigma\to \Sigma_0$, defined via $dx(\alpha_i)=0$.
	\item The \textit{local Galois involution} $\sigma_i\neq \mathrm{id}$
	defined via $x(q)=x(\sigma_i(q))$ near $\alpha_i$ with the fixed
	point $\alpha_i$. 
	\item The \textit{recursion kernel} $K_i(z,q)
	=\frac{\frac{1}{2}\int^{q'=q}_{q'=\sigma_i(q)}
		B(z,q')}{\omega^{(0)}_{1,0}(q)-\omega^{(0)}_{1,0}(\sigma_i(q))}$  constructed
	from the initial data. 
\end{itemize}
We also assume that $y$ is regular at the ramification points of $x$ and vice versa, and they have no coinciding ramification points.

Applying the $x$-$y$ transformation, we are interchanging the role of $x$ and $y$ and define by the same recursion (with $x$ and $y$ interchanged) $\omega^{(g)}_{0,m}$ a family of symmetric
meromorphic differentials on $m$ products of Riemann surfaces $\Sigma$. The initial data, the spectral curve $(\Sigma,y,x,B)$, consists of $\omega^{(0)}_{0,1}(z)=x(z)dy(z)$, $\omega^{(0)}_{0,2}(z_1,z_2)=B(z_1,z_2)=\frac{dz_1\,dz_2}{(z_1-z_2)^2}$. All $\omega^{(g')}_{0,m'}$ are used in the recursive formula \eqref{eq:trx} with residue at the ramification points of $y$, which we call $\beta_{i}$, i.e. $dy(\beta_i)=0$. 

For negative Euler characteristic, it is known that all $\omega^{(g)}_{n,0}$ have poles only at $\alpha_i$ the ramification points of $x$ and all $\omega^{(g)}_{0,m}$ have poles only at $\beta_{i}$ the ramification points of $y$. This article proves the following functional relations between $\omega^{(0)}_{0,m}$ and $\omega^{(0)}_{n,0}$
\begin{theorem}\label{thm:first}
	Consider Assumption \ref{Ass}.
	Let $\omega^{(g)}_{n,0}$ be generated by TR with the spectral curve $(\Sigma,x,y,B)$ and $\omega^{(g)}_{0,m}$ generated by TR with the spectral curve $(\Sigma,y,x,B)$, and let $W^{(g)}_{n,0}(x_1(z_1),...,x_n(z_n)):=\frac{\omega^{(g)}_{n,0}(z_1,...,z_n)}{dx_1(z_1)...dx_n(z_n)}$ and $W^{(g)}_{0,m}(y_1(z_1),...,y_m(z_m)):=\frac{\omega^{(g)}_{0,m}(z_1,...,z_n)}{dy_1(z_1)...dy_m(z_m)}$. Then
	the functional relation of the correlator $W^{(0)}_{0,m}$ reads
	\begin{align*}
		&W^{(0)}_{0,m}(y(z_1),...,y(z_m))\\
		=&\sum_{T\in \mathcal{G}_{0,m}}\frac{1}{\mathrm{Aut}(T)}\prod_{j=1}^m\bigg(-\frac{d}{d y(z_j)}\bigg)^{r_j(T)-1}\bigg(\prod_{k=1}^m\bigg(-\frac{dx(z_k)}{dy(z_k)}\bigg)\prod_{(\emptyset,I)\in \mathcal{I}(T)}W^{(0)}_{|I|,0}\big(x(z_I))\big)\bigg),
	\end{align*}
	where the set of trees $\mathcal{G}_{n,m}$ and the associated set $\mathcal{I}(T)$ are defined in Definition \ref{def:tree}. Furthermore, we need $r_j(T)$ as the valence of the $j^{\text{th}}$ $\circ$-vertex of $T$, and the abbreviation $x(z_I)=\{x(z_{i_1}),...,x(z_{i_{|J|}})\}$ for some set $I=\{i_1,...,i_{|I|}\}$.
\end{theorem}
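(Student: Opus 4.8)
The plan is to prove the statement by induction on $m$ with the loop insertion operator as the main engine, exactly as announced in the abstract, and to deduce Theorem~\ref{thm:first} as the $n=0$ specialisation of a more general tree formula for the (mixed) correlators. First I would fix the precise form of the loop insertion operator $D_{z_{m+1}}$ that sends $\omega^{(0)}_{0,m}\mapsto\omega^{(0)}_{0,m+1}$ by adjoining a new marked point $z_{m+1}$ on $\Sigma$: geometrically this is a single operation (adding a marked point), and it is only the choice of normalisation — dividing by $dx(z_{m+1})$ versus $dy(z_{m+1})$ — that decides whether the new leg is recorded as an $x$-point or a $y$-point. Its concrete action is read off from the residue formula \eqref{eq:trx} for the swapped spectral curve $(\Sigma,y,x,B)$. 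The decisive preliminary computation — the ``functional relation for the intermediate correlators'' — is to express, in genus zero, the action of $D_{z_{m+1}}$ on the elementary building blocks $W^{(0)}_{|I|,0}(x(z_I))$ and on the prefactors. Since $D_{z_{m+1}}$ inserts a fresh index into such a pure $x$-correlator, raising $\omega^{(0)}_{|I|,0}$ to $\omega^{(0)}_{|I|+1,0}$, I would establish the dictionary translating this $x$-insertion into the $y$-derivative $-\frac{d}{dy(z_{m+1})}$ together with the Jacobian factor $-\frac{dx(z_{m+1})}{dy(z_{m+1})}$ produced when renormalising the new leg from $dx$ to $dy$. This is precisely the point where the asymmetry between differentiating in $x$ and in $y$ enters, and where the prefactors and signs of the theorem originate.

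For the base of the induction I would take $m=0$ (or the general mixed case with no $y$-legs), where the right-hand side degenerates to the single tree consisting of one $\bullet$-vertex carrying all the $x$-data, so that the formula reduces to the tautology $W^{(0)}_{n,0}=W^{(0)}_{n,0}$. I would additionally check $m=1$ directly against the unstable datum $\omega^{(0)}_{0,1}=x\,dy$, i.e. $W^{(0)}_{0,1}(y(z_1))=x(z_1)$, and $m=2$ against the classical $x$-$y$ relation between $\omega^{(0)}_{2,0}$ and $\omega^{(0)}_{0,2}$; these low cases calibrate the sign conventions of the derivative operators and confirm that the automorphism weight $\frac{1}{\mathrm{Aut}(T)}$ is normalised correctly.

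The inductive step is the heart of the argument. Assuming the formula for $W^{(0)}_{0,m}$, I would apply $D_{z_{m+1}}$ to both sides. On the left this produces $W^{(0)}_{0,m+1}$ by definition of the loop insertion operator. On the right the Leibniz rule distributes $D_{z_{m+1}}$ across the product over the $\circ$- and $\bullet$-vertices of each tree $T\in\G_{0,m}$, and by the dictionary of the first paragraph each resulting term is of one of a few types: the new $\circ$-vertex $z_{m+1}$ is attached to an existing $\bullet$-vertex (enlarging its set $I$ and creating the factor $-\frac{dx(z_{m+1})}{dy(z_{m+1})}$); or it spawns a new $\bullet$-vertex linking $z_{m+1}$ into the tree; or it raises the valence $r_j$ of an existing $\circ$-vertex, hence the order of the derivative $\big(-\frac{d}{dy(z_j)}\big)^{r_j-1}$. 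I would then show that collecting all these contributions reproduces exactly the sum over $\G_{0,m+1}$ with the set $\I(T)$ as prescribed in Definition~\ref{def:tree}, i.e. that every $(m+1)$-vertex tree is generated the correct number of times and with the correct analytic coefficient once the symmetry factors are accounted for.

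I expect the main obstacle to be precisely this combinatorial reorganisation: verifying that the several distinct ways of building a given tree $T'\in\G_{0,m+1}$ by inserting $z_{m+1}$ into trees of $\G_{0,m}$ sum, together with their analytic weights, to the single coefficient demanded by the formula, and that the automorphism factors transform consistently under $\mathrm{Aut}(T)\to\mathrm{Aut}(T')$. A secondary but genuine technical point is controlling the interplay between the iterated $y$-derivatives and the $x$-insertion when several derivatives act at the same vertex — confirming that $\big(-\frac{d}{dy}\big)^{r_j-1}$ commutes appropriately through the Jacobian factors, and that passing the residues of \eqref{eq:trx} through the loop insertion operator introduces no spurious poles at the ramification points $\beta_i$ of $y$. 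Once the bookkeeping of trees and weights is matched, reading off the $n=0$ case completes the proof of Theorem~\ref{thm:first}.
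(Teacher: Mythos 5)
Your overall architecture (prove a general mixed-correlator tree formula by induction via loop insertion, then specialise to $n=0$) matches the paper's, but the foundation you build it on is wrong, and it is wrong in exactly the way the paper warns against. You treat the loop insertion as a \emph{single} geometric operation on the $z$-plane --- ``adjoining a marked point'' --- whose output is recorded as an $x$-leg or a $y$-leg merely by dividing by $dx(z_{m+1})$ or $dy(z_{m+1})$, with its action ``read off from the residue formula \eqref{eq:trx}''. If that were true, the differentials $\omega^{(0)}_{n,m}$ would all coincide up to sign with $\omega^{(0)}_{n+m,0}$ and the theorem would be a tautology; the Airy curve already refutes this, since $\omega^{(0)}_{0,m}=0$ for $m>2$ while $\omega^{(0)}_{m,0}\neq 0$. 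In the paper there are two genuinely distinct operators $D_x$ and $\tilde{D}_y$ (Definition \ref{def:op}): they are deformations of the \emph{moduli} of the spectral curve taken at fixed $x$ respectively fixed $y$ (in the 2-matrix model, derivatives with respect to the two disjoint coupling sets $t^{(1)}_k$ and $t^{(2)}_k$), they annihilate the variables $x_i,y_j$ but not $z_i$, and their commutativity is a nontrivial hypothesis --- Assumption \ref{Ass} --- which the theorem explicitly invokes and which your proposal never needs, a telling symptom that the two insertions have been collapsed into one. Defining the insertion on the $z$-plane through the residue formula is precisely what was done in \cite[App.~A]{Chekhov:2006vd}, and the paper points out that the resulting \cite[Theorem A.1]{Chekhov:2006vd} \emph{contradicts} the correct answer at $g=0$: it produces only the subset of terms in which the new vertex has valence $\leq 2$.

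This is not a cosmetic issue, because your ``dictionary'' inherits the same defect: translating an $x$-insertion into ``the $y$-derivative together with the Jacobian factor'' amounts to $W^{(0)}_{n,1}(x_I|y)=-\tfrac{dX(y)}{dy}\,W^{(0)}_{n+1,0}(x_I,X(y))$, which is only the $k=1$ term of the true relation. Proposition \ref{prop:Wn1} shows that $W^{(0)}_{n,1}$ contains, in addition, all partition terms $I_1\uplus\dots\uplus I_k=I$ with $k\geq 2$, weighted by $\big(\tfrac{d}{dy}\big)^{k-1}$ acting at the \emph{new} vertex. Consequently your inductive step cannot close: when $\tilde{D}_{y_{m+1}}$ hits a block $W^{(0)}_{|I|,0}$, the output $W^{(0)}_{|I|,1}$ must be re-expanded into the full $\mathcal{G}_{|I|,1}$ structure (this is the second, ``splitting'' case of Lemma \ref{lem:Gnm1a}, where one $\bullet$-vertex breaks into several $\bullet$-vertices all attached to the new $\circ$-vertex), whereas your taxonomy of terms (attach to an existing $\bullet$-vertex, spawn one new $\bullet$-vertex, or raise the valence of an \emph{existing} $\circ$-vertex) never produces trees in which the new $\circ$-vertex has valence $\geq 2$. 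Repairing this forces the paper's actual structure: first an auxiliary induction in $n$ with the operator $D_x$ (base case Lemma \ref{lem:W11}, obtained from $0=D_x\,y=D_x\,Y(X(y))$) establishing Proposition \ref{prop:Wn1}, and only then the induction in $m$ of Theorem \ref{thm:main} that uses it --- a double induction in which both operators, and hence Assumption \ref{Ass}, are indispensable.
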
 

This is achieved by making sense of $\omega^{(g)}_{n,m}$ (or better $W^{(g)}_{n,m}$) for any $n,m$. Theorem \ref{thm:main} gives the more general functional relation between $W^{(0)}_{n,m}$  and $W^{(0)}_{n',0}$. The technical tool in all computations is the \textit{loop insertion operator}. The derivations carried out in this article are motivated by recent results in the Quartic Kontsevich model \cite{Branahl:2020yru,Hock:2021tbl,Branahl:2021uea}. 
It is more than worth to notice that actually a relation between $\omega^{(g)}_{n,0}$ and $\omega^{(g)}_{0,m}$ is known for any genus $g$ \cite{Borot:2021thu} (after simplification equivalent to Theorem \ref{thm:first} for $g=0$). Their proof is built on a series of papers by one of the others together with Bychkov, Dunin-Barkowski and Kazarian \cite{Bychkov:2020ujd,Bychkov:2020yzy,Bychkov:2021hfh}.
It includes a wide class of spectral curves related to double Hurwitz numbers. However, our Theorem \ref{thm:first} gives a more canonical representaion in terms of derivatives wrt to $y$ instead of $\ln y$. The represention in terms of $\ln y$ suggests first of all a more complicated pole structure, but it was proved in \cite{Bychkov:2020yzy} that all other poles except for the ramification points cancel.  In other words, Theorem \ref{thm:first} is more canonical in the sense that it becomes obvious that all $W^{(g)}_{0,m}$ with $m>2$ have only poles at the ramification points of $y$ (that is $dy=0)$. We show in Sec. \ref{Sec:eq} how in the Theorem of \cite{Borot:2021thu} the additional poles at the zeros of $y$ disappear.

We emphasise that inserting the spectral curve related to the duality between ordinary and fully simple maps \cite{Borot:2021eif}, the functional relation for the $R$-transform in higher order free probability \cite{Voiculescu1986AdditionOC,Collins2006SecondOF} is given by a much simpler formula in Corollary \ref{cor:free} than in \cite{Borot:2021thu}.

\subsection*{Acknowledgements}

I thank Sergey Shadrin, Gaetan Borot and Johannes Branahl for discussions. This work was supported through
the Walter-Benjamin fellowship\footnote{``Funded by
	the Deutsche Forschungsgemeinschaft (DFG, German Research
	Foundation) -- Project-ID 465029630}.

\section{The 2-Matrix Model as Prime Example}
To get a feeling for the later computations, we will discuss the prime example the 2-matrix model (see \cite{Eynard:2002kg,Chekhov:2006vd} for more details). Let $H_N$ be the space of $N\times N$ hermitian matrices, then the partition function of the 2-matrix model is defined by
\begin{align}\label{Z2MM}
	\mathcal{Z}^{2MM}:=\int_{H_N}dM_1\,dM_2\,e^{-N\mathrm{Tr}[V_1(M_1)+V_2(M_2)-M_1M_2]}=e^{-N^2 \mathcal{F}},
\end{align} 
where $V_i(x)=\sum_{k=1}^{d_i+1}\frac{t^{(i)}_{k}}{k}x^k$ for $i=1,2$ are polynomials of degree $d_i+1$ and $\mathcal{F}$ is called the free energy. The partition function \eqref{Z2MM} has to be understood as a formal matrix model in the sense that the expansion of the integrand (except for the Gaussian parts) is interchanged with the integral. With the probability weight $d\mu_{2MM}:=\frac{1}{\mathcal{Z}^{2MM}}dM_1\,dM_2\,e^{-N\mathrm{Tr}[V_1(M_1)+V_2(M_2)-M_1M_2]}$, we define the following correlators by the cumulants (connected components) of the resolvents 
\begin{align}\label{2MMW}
	\hat{W}_{n,m}(x_1,...,x_n|y_1,...,y_m):=&N^{n+m-2}\bigg\langle \prod_{i=1}^{n}\frac{1}{x_i-M_1}\prod_{j=1}^{m}\frac{1}{y_j-M_2}\bigg\rangle_{c}\\\nonumber
	=&-\frac{\partial}{\partial V_1(x_1)}...\frac{\partial}{\partial V_1(x_n)}\frac{\partial}{\partial V_2(y_1)}...\frac{\partial}{\partial V_2(y_m)}\mathcal{F}\\\nonumber
	&+\frac{\delta_{n,1}\delta_{m,0}}{x_1}+\frac{\delta_{n,0}\delta_{m,1}}{y_1},
\end{align}
where the formal loop insertion operators are given by
\begin{align}\label{insertion2MM}
	\frac{\partial}{\partial V_1(x)}=\sum_{i=1}^\infty\frac{i}{x^{i+1}}\frac{\partial}{\partial t^{(1)}_i},\qquad \frac{\partial}{\partial V_2(y)}=\sum_{j=1}^\infty\frac{j}{y^{j+1}}\frac{\partial}{\partial t^{(2)}_j}.
\end{align}
To reduce later formulas, we introduce a shift for the first topologies:
\begin{align*}
	W_{n,m}(x_1,...,x_n|y_1,...,&y_m):=\hat{W}_{n,m}(x_1,...,x_n|y_1,...,y_m)\\
	&-\delta_{n,1}\delta_{m,0}V'_1(x_1)-\delta_{n,0}\delta_{m,1}V'_2(y_1)+\frac{\delta_{n,2}\delta_{m,0}}{(x_1-x_2)^2}+\frac{\delta_{n,0}\delta_{m,2}}{(y_1-y_2)^2},
\end{align*}
The correlators $W_{n,m}$ and the free energy are known to have a formal genus expansion
\begin{align*}
	W_{n,m}=\sum_{g=0}^\infty N^{-2g}W^{(g)}_{n,m},\qquad \mathcal{F}=\sum_{g=0}^\infty N^{-2g}\mathcal{F}^{(g)}.
\end{align*}
By definition, the loop insertion operator acts as follows
\begin{align}\label{Wn+12MM}
	W^{(g)}_{n+1,m}(x_1,...,x_n,x|y_1,...,y_m)=\frac{\partial}{\partial V_1(x)}W^{(g)}_{n,m}(x_1,...,x_n,x|y_1,...,y_m)\\\label{Wm+12MM}
	W^{(g)}_{n,m+1}(x_1,...,x_n|y_1,...,y_m,y)=\frac{\partial}{\partial V_2(y)}W^{(g)}_{n,m}(x_1,...,x_n,x|y_1,...,y_m).
\end{align} 
This action is independent of $x_i$ and $y_j$, so it is an action on $W^{(g)}_{n,m}$ itself and not on its variables.

The amazing discovery of \cite{Chekhov:2006vd} was that all correlators $W^{(g)}_{n,0}$ are computed by TR:
\begin{theorem}[\cite{Chekhov:2006vd}]\label{thm:2MM}
	The correlators are given by
	\begin{align*}
		W^{(g)}_{n,0}(x_1(z_1),....,x_n(z_n))=\frac{\omega^{(g)}_{n,0}(z_1,...,z_n)}{dx_1(z_1)...dx_n(z_n)},
	\end{align*}
	where $\omega^{(g)}_{n,0}(z_1,...,z_n)$ is given by TR \eqref{eq:trx} and $x(z),y(z)$ are a solution of the master loop equation \cite[eq. (2-17)]{Chekhov:2006vd} $E(x(z),y(z))=0$, which is an algebraic equation. 
	
	For the genus zero assumption on $E(x(z),y(z))=0$, the solution has a rational parametrisation and is explicit \cite{Eynard:2002kg}
	\begin{align*}
		x(z)=\gamma z+\sum_{k=0}^{d_2}\frac{\tau^{(1)}_k}{z^k},\qquad y(z)=\frac{\gamma}{z} +\sum_{k=0}^{d_1}\tau^{(2)}_k z^k,
	\end{align*}
where $\tau^{(i)}_k$ satisfies
\begin{align*}
	y(z)-V'_1(x(z))&\widesim[2]{s\to \infty} -\frac{\gamma}{s}+\mathcal{O}(s^{-2})\\
	x(z)-V'_2(y(z))&\widesim[2]{s\to 0} -\frac{s}{\gamma}+\mathcal{O}(s^{2}).
\end{align*}
\end{theorem}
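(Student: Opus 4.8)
The plan is to show that the genus-expanded correlators $W^{(g)}_{n,0}$ of the two-matrix model satisfy exactly the system of loop equations whose unique solution with the correct pole structure is the one produced by the topological recursion \eqref{eq:trx}. I would proceed in three stages: first, derive the Schwinger--Dyson (loop) equations from the invariance of the integral \eqref{Z2MM}; second, extract the algebraic spectral curve $E(x,y)=0$ from the leading order of the master loop equation and, in the genus-zero case, produce its explicit rational uniformisation; third, show by induction on $-\chi = 2g+n-2$ that the loop equations together with the analyticity of the correlators are equivalent to the residue formula \eqref{eq:trx}.

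First, I would derive the loop equations from the invariance of \eqref{Z2MM} under the infinitesimal reparametrisation $M_1 \mapsto M_1 + \epsilon/(x-M_1)$. The split (Jacobian) term produces a product of resolvents, while the variation of the action produces $\langle \mathrm{Tr}[(V_1'(M_1)-M_2)/(x-M_1)]\rangle$. The term involving $V_1'(M_1)$ is rendered polynomial in $x$ using that $V_1'(M_1)-V_1'(x)$ is divisible by $M_1-x$; the genuine obstruction --- and the part I expect to be hardest --- is the coupling term $\langle \mathrm{Tr}[M_2/(x-M_1)]\rangle$, which does not close on the $\hat W_{n,0}$ alone. Following \cite{Chekhov:2006vd}, I would introduce the auxiliary polynomial correlator $P(x,y)$ (polynomial in $y$ of degree at most $d_2-1$) together with the mixed correlators $\hat W_{n,m}$, and combine the $M_1$- and $M_2$-equations of motion to eliminate $M_2$ in favour of $V_1'(M_1)$ modulo $P$. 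The outcome is the master loop equation \cite[eq. (2-17)]{Chekhov:2006vd}, a closed polynomial relation $E(x,y)=0$ satisfied at leading order by the pair $(x,Y(x))$ with $Y(x)=V_1'(x)-W^{(0)}_{1,0}(x)$.

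Second, I would analyse this planar equation. At leading order in $1/N^2$ the master loop equation is algebraic, and the one-cut solution $Y(x)$ has square-root endpoints; the genus-zero hypothesis on $E(x,y)=0$ means the curve is rational and admits a global uniformisation $z\mapsto(x(z),y(z))$. To pin down the explicit form I would impose the asymptotic conditions dictated by the potentials, namely $y(z)-V_1'(x(z))\sim -\gamma/z$ as $z\to\infty$ and $x(z)-V_2'(y(z))\sim -z/\gamma$ as $z\to 0$. These force the ansatz $x(z)=\gamma z+\sum_{k=0}^{d_2}\tau^{(1)}_k z^{-k}$ and $y(z)=\gamma/z+\sum_{k=0}^{d_1}\tau^{(2)}_k z^{k}$, and matching the polynomial parts of the large-$z$ and small-$z$ expansions against the prescribed behaviour reproduces the coefficient constraints of \cite{Eynard:2002kg}.

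Finally, I would close the argument inductively. Using the loop insertion operator \eqref{Wn+12MM} one shows that each $\omega^{(g)}_{n,0}$ is a meromorphic differential on $\Sigma$ whose only poles sit at the zeros $\alpha_i$ of $dx$, since the loop equations admit no other singularities than the branch points of the planar curve. Rewritten in the uniformising variable, the subleading loop equations become a linear equation for $\omega^{(g)}_{n+1,0}$ whose inhomogeneous term is precisely the bracket in \eqref{eq:trx}; the coefficient of the linear part is exactly $\omega^{(0)}_{1,0}(q)-\omega^{(0)}_{1,0}(\sigma_i(q))$. Solving this linear equation by the Cauchy kernel $B(z,\cdot)$ and localising at the $\alpha_i$ identifies the Green's function with the recursion kernel $K_i(z,q)$, so that uniqueness of the meromorphic solution with vanishing residues away from the branch points identifies the matrix-model correlator $W^{(g)}_{n,0}$ with the topological-recursion output, establishing the theorem.
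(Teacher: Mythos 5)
This theorem is not proved in the paper at all: it is quoted from \cite{Chekhov:2006vd}, with the explicit genus-zero parametrisation taken from \cite{Eynard:2002kg}, so the only benchmark is the cited literature, and your outline follows precisely that route (Schwinger--Dyson equations from the reparametrisation invariance of \eqref{Z2MM}, elimination of the mixed coupling term via the auxiliary polynomial and mixed correlators to obtain the master loop equation, rational uniformisation fixed by the stated asymptotics, and induction on $-\chi=2g+n-2$ solving the linearised loop equations with the Cauchy kernel localised at the branch points). Your sketch is sound and is essentially the cited proof; the one cosmetic mismatch is your sign convention $Y(x)=V_1'(x)-W^{(0)}_{1,0}(x)$, which is that of \cite{Chekhov:2006vd}, whereas this paper's shifted correlators satisfy $Y(x)=W^{(0)}_{1,0}(x)=\hat{W}^{(0)}_{1,0}(x)-V_1'(x)$.
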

By interchanging $V_1 \leftrightarrow V_2$, which is nothing than interchanging $x\leftrightarrow y$, we get
\begin{align*}
	W^{(g)}_{0,m}(y_1(z_1),....,y_m(z_m))=\frac{\omega^{(g)}_{0,m}(z_1,...,z_m)}{dy_1(z_1)...dy_m(z_m)},
\end{align*}
where $\omega^{(g)}_{0,m}$ is given by \eqref{eq:trx} after interchanging $x$ and $y$.

From Theorem \ref{thm:2MM}, the disc amplitude is
\begin{align}\label{1Pz2MM}
	\hat{W}^{(0)}_{1,0}(x(z))=y(z)+V'_1(x(z)),\qquad \hat{W}^{(0)}_{0,1}(y(z))=x(z)+V'_2(y(z)).
\end{align}
The action of the loop insertion operator \eqref{insertion2MM} on the disc amplitude generates the cylinder amplitude, however it acts on a function with variable $x$ (or $y$, respectively), where $x$ (or $y$) is kept fixed.  
\begin{figure}[h]
	\scalebox{0.8}{
\begingroup%
  \makeatletter%
  \providecommand\color[2][]{%
    \errmessage{(Inkscape) Color is used for the text in Inkscape, but the package 'color.sty' is not loaded}%
    \renewcommand\color[2][]{}%
  }%
  \providecommand\transparent[1]{%
    \errmessage{(Inkscape) Transparency is used (non-zero) for the text in Inkscape, but the package 'transparent.sty' is not loaded}%
    \renewcommand\transparent[1]{}%
  }%
  \providecommand\rotatebox[2]{#2}%
  \newcommand*\fsize{\dimexpr\f@size pt\relax}%
  \newcommand*\lineheight[1]{\fontsize{\fsize}{#1\fsize}\selectfont}%
  \ifx\svgwidth\undefined%
    \setlength{\unitlength}{498.705134bp}%
    \ifx\svgscale\undefined%
      \relax%
    \else%
      \setlength{\unitlength}{\unitlength * \real{\svgscale}}%
    \fi%
  \else%
    \setlength{\unitlength}{\svgwidth}%
  \fi%
  \global\let\svgwidth\undefined%
  \global\let\svgscale\undefined%
  \makeatother%
  \begin{picture}(1,0.44457295)%
    \lineheight{1}%
    \setlength\tabcolsep{0pt}%
    \put(0,0){\includegraphics[width=\unitlength,page=1]{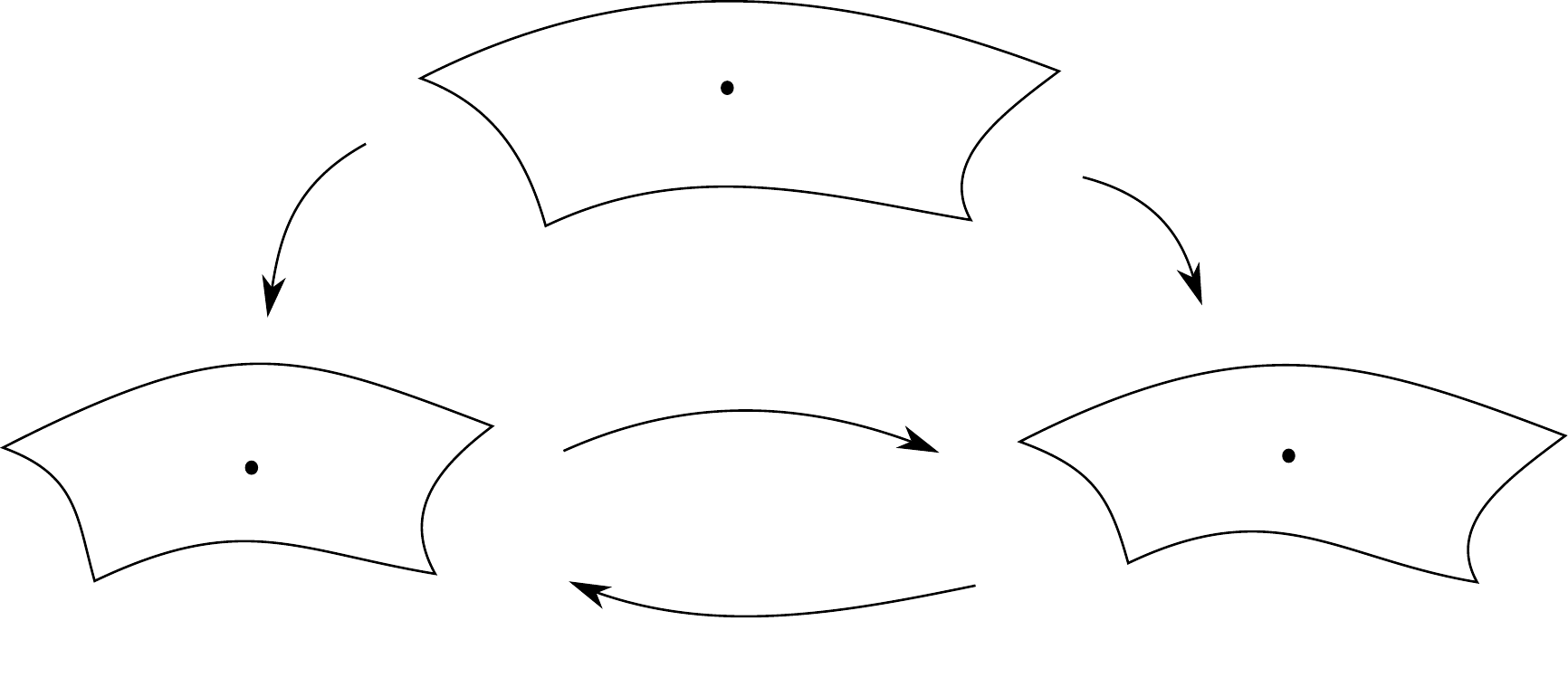}}%
    \put(0.48025088,0.38928325){\makebox(0,0)[lt]{\lineheight{1.25}\smash{\begin{tabular}[t]{l}$z$\end{tabular}}}}%
    \put(0.1139859,0.15064025){\makebox(0,0)[lt]{\lineheight{1.25}\smash{\begin{tabular}[t]{l}$x$\end{tabular}}}}%
    \put(0.83615434,0.15367857){\makebox(0,0)[lt]{\lineheight{1.25}\smash{\begin{tabular}[t]{l}$y$\end{tabular}}}}%
    \put(0.10843054,0.30079912){\makebox(0,0)[lt]{\lineheight{1.25}\smash{\begin{tabular}[t]{l}$x(z)$\end{tabular}}}}%
    \put(0.74728337,0.31167154){\makebox(0,0)[lt]{\lineheight{1.25}\smash{\begin{tabular}[t]{l}$y(z)$\end{tabular}}}}%
    \put(0.45260052,0.20735231){\makebox(0,0)[lt]{\lineheight{1.25}\smash{\begin{tabular}[t]{l}$Y(x)$\end{tabular}}}}%
    \put(0.46651133,0.00834192){\makebox(0,0)[lt]{\lineheight{1.25}\smash{\begin{tabular}[t]{l}$X(y)$\end{tabular}}}}%
  \end{picture}%
\endgroup%
}
	\caption{Coverings $x(z)$ and $y(z)$ of Riemann surfaces. The maps $X(y)$ and $Y(x)$ are functional inverses to each other, and locally multivalued functions depending on the number of branches of $x(z)$ and $y(z)$.}
	\label{fig:xymap}
\end{figure}
The rhs of \eqref{1Pz2MM} is written in terms of $x$ (or $y$, respectively) by
\begin{align}\label{1Px2MM}
	\hat{W}^{(0)}_{1,0}(x)=Y(x)+V'_1(x),\qquad \hat{W}^{(0)}_{0,1}(y)=X(y)+V'_2(y),
\end{align}
where $Y(x)$ (or $X(y)$) are in general multivalued functions depending on the number of branches of $x(z)$ (or $y(z)$), see \cite[Sec. 4.4]{Eynard:2002kg} for further details. We have the functional inversion
\begin{align}
	x=X(y)\qquad \iff \qquad y=Y(x).
\end{align}

Now, the loop insertion operator can be applied since the rhs is expressed in terms of the variable $x$ (or $y$, respectively). Due to \eqref{Wn+12MM}, we have
\begin{align*}
	\hat{W}^{(0)}_{2,0}(x_1,x_2)=&\frac{\partial}{\partial V_1(x_2)}\hat{W}^{(0)}_{1,0}(x_1)=\frac{\partial}{\partial V_1(x_2)}\bigg(Y(x_1)+V'_1(x_1)\bigg)\\
	=&\frac{1}{x'(z_1(x_1))x'(z_2(x_2))(z_1(x_1)-z_2(x_2))^2}-\frac{1}{(x_1-x_2)^2},
\end{align*} 
where $\frac{\partial V'_1(x_1)}{\partial V_1(x_2)}=-\frac{1}{(x_1-x_2)^2}$ by applying the geometric series for $|\frac{x_2}{x_1}|<1$. This explains why \eqref{Wn+12MM} and \eqref{Wm+12MM} holds for $W$ (and not only for $\hat{W}$). We write $z_i(x_i)$ as some local inverse of $x_i(z_i)$, where the chosen branch does not matter, however there is some canonical ''physical'' branch. Consequently, the action of the loop insertion operator on $Y(x)$ is
\begin{align*}
	\frac{\partial}{\partial V_1(x_2)}Y(x_1)=\frac{\omega^{(0)}_{2,0}(z_1,z_2)}{dx_1(z_1)dx_2(z_2)}\bigg\vert_{z_i=z_i(x_i)}=W^{(0)}_{2,0}(x_1,x_2).
\end{align*}
Trivially, we can do the same computation for $X(y)$. 

As an interesting example, we get the correlator $W^{(0)}_{1,1}(x|y)$ by the previous rules of the loop insertion operator for free, which is consistent with \cite{Daul:1993bg}:
\begin{lemma}\label{lem:W11}
	The cylinder correlator with two boundaries of different colours is 
	\begin{align*}
		W^{(0)}_{1,1}(x|y)=-\frac{1}{x'(z_1(x))y'(z_2(y))(z_1(x)-z_2(y))^2}=-\frac{dX(y)}{dy}W^{(0)}_{2,0}(x,X(y)).
	\end{align*}
\begin{proof}
	Making use of the previously discussed properties of the loop insertion operator and the chain rule, we get
	\begin{align*}
		0=&\frac{\partial}{\partial V_1(x(z_1))}y=\frac{\partial}{\partial V_1(x(z_1))}Y(X(y))\\
		=&\frac{1}{x'(z_2(X(y)))x'(z_1(x))(z_2(X(y))-z_1(x))^2}+Y'(X(y))\frac{\partial}{\partial V_1(x)}X(y)\\
		=&\frac{1}{x'(z_2(X(y)))x'(z_1(x))(z_1(X(y))-z_1(x))^2}+Y'(X(y))W^{(0)}_{1,1}(x|y),
	\end{align*}
	where we use the notation that $z(y)$ is an inverse of $y(z)$ and $z(x)$ an inverse of $x(z)$. The chain rule $\frac{d Y(x)}{d x}\vert_{x=x(z)}\cdot\frac{d x(z)}{d z}=\frac{d Y(x(z))}{d z}=\frac{d y(z)}{d z}$ finishes the proof.
\end{proof}
\end{lemma}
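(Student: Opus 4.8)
The plan is to exploit the functional inversion $Y(X(y))=y$ together with the two facts established just above: the loop insertion operator $\frac{\partial}{\partial V_1(x)}$ annihilates any quantity held fixed as a spectral parameter, it acts on $Y$ by $\frac{\partial}{\partial V_1(x_2)}Y(x_1)=W^{(0)}_{2,0}(x_1,x_2)$, and it acts on $X$ by $\frac{\partial}{\partial V_1(x)}X(y)=W^{(0)}_{1,1}(x|y)$ (the last because $\frac{\partial}{\partial V_1(x)}V'_2(y)=0$, so differentiating $\hat W^{(0)}_{0,1}(y)=X(y)+V'_2(y)$ produces exactly $W^{(0)}_{1,1}$, there being no shift at $(n,m)=(1,1)$). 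Since $y$ is a fixed spectral parameter we have $\frac{\partial}{\partial V_1(x)}y=0$, so differentiating the identity $y=Y(X(y))$ must yield a relation in which $W^{(0)}_{1,1}$ is the only unknown.

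First I would apply $\frac{\partial}{\partial V_1(x)}$ to $Y(X(y))$ and expand it by the Leibniz/chain rule for this derivation, treating the coupling dependence of $Y$ itself and that of its argument $X(y)$ separately. The first contribution is $\big(\frac{\partial}{\partial V_1(x)}Y\big)$ evaluated at the argument $X(y)$, which equals $W^{(0)}_{2,0}(X(y),x)=W^{(0)}_{2,0}(x,X(y))$; the second is $Y'(X(y))\cdot\frac{\partial}{\partial V_1(x)}X(y)=Y'(X(y))\,W^{(0)}_{1,1}(x|y)$. Setting the total to zero and solving gives $W^{(0)}_{1,1}(x|y)=-\frac{1}{Y'(X(y))}\,W^{(0)}_{2,0}(x,X(y))$.

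The remaining work is to translate $Y'(X(y))$ into the rational parametrisation. I would use the chain-rule identity $Y'(x(z))\,x'(z)=y'(z)$, evaluated at the preimage $z_2(y)$ with $y(z_2(y))=y$, to obtain $Y'(X(y))=y'(z_2(y))/x'(z_2(y))$; the dual identity $X'(y(z))\,y'(z)=x'(z)$ gives $\frac{dX(y)}{dy}=x'(z_2(y))/y'(z_2(y))=1/Y'(X(y))$, which already yields the second stated equality $W^{(0)}_{1,1}(x|y)=-\frac{dX(y)}{dy}\,W^{(0)}_{2,0}(x,X(y))$. Substituting the explicit cylinder amplitude $W^{(0)}_{2,0}(x,X(y))=\frac{1}{x'(z_1(x))\,x'(z_2(y))\,(z_1(x)-z_2(y))^2}$ and cancelling one factor of $x'(z_2(y))$ then produces the explicit form.

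The main obstacle I anticipate is the bookkeeping of branches and evaluation points rather than any hard estimate. One must justify that the $x$-preimage of the argument $X(y)$ is precisely $z_2(y)$: this follows from $X(y(z))=x(z)$, so that the local inverses of $x$ and $y$ are compatible on this composition, but it should be stated carefully so that the substitution $z(X(y))=z_2(y)$ is unambiguous. One must also be explicit that $\frac{\partial}{\partial V_1(x)}$ differentiates $X$ and $Y$ as functions of the couplings with their spectral arguments held fixed, which is what legitimises the chain-rule split in the second step. Once these points are pinned down, the argument is a direct substitution.
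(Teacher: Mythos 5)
Your proposal is correct and follows essentially the same route as the paper: you apply the loop insertion operator to the identity $y=Y(X(y))$, split it via the chain rule into $W^{(0)}_{2,0}(x,X(y))+Y'(X(y))\,W^{(0)}_{1,1}(x|y)=0$, and convert $Y'(X(y))=1\big/\tfrac{dX(y)}{dy}$ using $Y'(x(z))\,x'(z)=y'(z)$, exactly as in the paper's proof. Your extra justifications (why $\tfrac{\partial}{\partial V_1(x)}X(y)=W^{(0)}_{1,1}(x|y)$ and why the $x$-preimage of $X(y)$ is $z_2(y)$) merely make explicit points the paper leaves implicit.
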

This computation shows that the loop insertion operator has to act on functions represented by variables $x_i,y_j$ and not $z_i$. This is due to the fact, that $\frac{\partial}{\partial V_1(x)}x_i=\frac{\partial}{\partial V_1(x)}y_j=0$, whereas
$\frac{\partial}{\partial V_1(x)}z_i(x_i)\neq 0$, because $z_i(x_i)$ is some inverse of $x_i(z_i)$ depending on $t^{(1)}_k$. We should ensure that all variables are either on the $x$- or on the $y$-plane, see Fig. \ref{fig:xymap}.

We finish this section by recapping known examples for the pair of pants topology (see for instance \cite[Sec. A.1.2]{Chekhov:2006vd} and \cite[eq. 5.14]{Borot:2021eif}):
\begin{proposition}\label{prop:W32MM}
	For the pair of pants, we have the following functional relations:
	\begin{align}\label{W122MM}
		W^{(0)}_{2,1}(x_1,x_3|Y(x_2))=&-\frac{dx_2}{dY(x_2)}W^{(0)}_{3,0}(x_1,x_2,x_3)\\\nonumber
		&-\frac{d}{dY(x_2)}\bigg(\frac{dx_2}{dY(x_2)}W^{(0)}_{2,0}(x_2,x_3)W^{(0)}_{2,0}(x_1,x_2)\bigg)
	\end{align}
	\begin{align}
		W^{(0)}_{0,3}(Y(x_1),Y(x_2)&,Y(x_3))=-\frac{dx_1}{dY(x_1)}\frac{dx_2}{dY(x_2)}\frac{dx_3}{dY(x_3)}W^{(0)}_{3,0}(x_1,x_2,x_3)\\\nonumber
		&+\frac{d}{dY(x_1)}\bigg(\frac{dx_1}{dY(x_1)}\frac{dx_2}{dY(x_2)}\frac{dx_3}{dY(x_3)}W^{(0)}_{2,0}(x_1,x_2)W^{(0)}_{2,0}(x_1,x_3)\bigg)\\\nonumber
		&+\frac{d}{dY(x_2)}\bigg(\frac{dx_1}{dY(x_1)}\frac{dx_2}{dY(x_2)}\frac{dx_3}{dY(x_3)}W^{(0)}_{2,0}(x_1,x_2)W^{(0)}_{2,0}(x_2,x_3)\bigg)\\\nonumber
		&+\frac{d}{dY(x_3)}\bigg(\frac{dx_1}{dY(x_1)}\frac{dx_2}{dY(x_2)}\frac{dx_3}{dY(x_3)}W^{(0)}_{2,0}(x_1,x_3)W^{(0)}_{2,0}(x_2,x_3)\bigg).
	\end{align}
	These functional relations are also true on the $z$-plane by writing $Y(x_i)=y(z_i)$ and $x_i=x(z_i)$, or on the $y$-plane by writing $Y(x_i)=y_i$ and $x_i=X(y_i)$.
	\begin{proof}
		Starting with Lemma \ref{lem:W11}, we have a relation between $W^{(0)}_{2,0}$ and $W^{(0)}_{1,1}$ on the $z$-plane. Pushing it forward to the $x$-plane yields
		\begin{align*}
			W^{(0)}_{2,0}(x_1,x_2)=-\frac{dY(x_2)}{dx_2}W^{(0)}_{1,1}(x_1|Y(x_2)).
		\end{align*}
	The loop insertion operator $\frac{\partial}{\partial V_1(x_3)}$ generates through Leibniz and chain rule
	\begin{align*}
		W^{(0)}_{3,0}(x_1,x_2,x_3)=&-\frac{dW^{(0)}_{2,0}(x_2,x_3)}{dx_2}W^{(0)}_{1,1}(x_1|Y(x_2))-\frac{dY(x_2)}{dx_2}W^{(0)}_{2,1}(x_1,x_3|Y(x_2))\\
		&-\frac{dY(x_2)}{dx_2}\frac{d W^{(0)}_{1,1}(x_1|y)}{dy}\bigg\vert_{y=Y(x_2)}W^{(0)}_{2,0}(x_2,x_3)\\
		=&-\frac{d}{dx_2}\bigg(W^{(0)}_{2,0}(x_2,x_3)W^{(0)}_{1,1}(x_1|Y(x_2))\bigg)-\frac{dY(x_2)}{dx_2}W^{(0)}_{2,1}(x_1,x_3|Y(x_2)),
	\end{align*}
	which is essentially the first relation. The second relation is achieved by using the first one (for $x,y$ interchanged) and 
	\begin{align}\label{W22MM}
		W^{(0)}_{2,0}(x_1,x_2)=\frac{dY(x_1)}{dx_1}\frac{dY(x_2)}{dx_2}W^{(0)}_{0,2}(Y(x_1),Y(x_2))
	\end{align}
	coming from $\omega^{(0)}_{2,0}(z_1,z_2)=\omega^{(0)}_{0,2}(z_1,z_2)$. Again, applying the loop insertion operator $\frac{\partial}{\partial V_1(x_3)}$ yields 
	\begin{align*}
		W^{(0)}_{3,0}(x_1,x_2,x_3)=&\frac{dW^{(0)}_{2,0}(x_1,x_3)}{dx_1}\frac{dY(x_2)}{dx_2}W^{(0)}_{0,2}(Y(x_1),Y(x_2))\\
		+&\frac{dY(x_1)}{dx_1}\frac{dW^{(0)}_{2,0}(x_2,x_3)}{dx_2}W^{(0)}_{0,2}(Y(x_1),Y(x_2))\\
		+&\frac{dY(x_1)}{dx_1}\frac{dY(x_2)}{dx_2}\frac{d W^{(0)}_{0,2}(y,Y(x_2))}{dy}\bigg\vert_{y=Y(x_1)}W^{(0)}_{2,0}(x_1,x_3)\\
		+&\frac{dY(x_1)}{dx_1}\frac{dY(x_2)}{dx_2}\frac{d W^{(0)}_{0,2}(Y(x_1),y)}{dy}\bigg\vert_{y=Y(x_2)}W^{(0)}_{2,0}(x_2,x_3)\\
		+&\frac{dY(x_1)}{dx_1}\frac{dY(x_2)}{dx_2} W^{(0)}_{1,2}(x_3|y,Y(x_2)).
	\end{align*}
	Collecting the terms, applying the chain rule, using \eqref{W22MM}, the first relation \eqref{W122MM} and multiplying by $\frac{dx_1}{dY(x_1)}\frac{dx_2}{dY(x_2)}\frac{dx_3}{dY(x_3)}$ yields the assertion.
	\end{proof}
\end{proposition}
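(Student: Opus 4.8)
The plan is to obtain both relations by applying the loop insertion operator $\frac{\partial}{\partial V_1(x_3)}$ to the two ``building-block'' identities already at hand: Lemma~\ref{lem:W11} for the first relation and the cylinder identity~\eqref{W22MM} for the second. The single mechanism that drives everything is that this operator raises the number of $x$-type boundaries by one, $\frac{\partial}{\partial V_1(x_3)}W^{(0)}_{n,m}=W^{(0)}_{n+1,m}$, together with the derivative of the branch function $\frac{\partial}{\partial V_1(x_3)}Y(x_i)=W^{(0)}_{2,0}(x_i,x_3)$ computed earlier in this section. The essential subtlety is that the objects being differentiated are written as functions of the $x$- and $y$-variables, while the arguments $Y(x_i)$ themselves depend on $V_1$; hence every application of $\frac{\partial}{\partial V_1(x_3)}$ splits into a ``direct'' piece that raises the correlator index with the $y$-argument held fixed, and ``indirect'' pieces acting through each $Y(x_i)$ via the chain rule.

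For the first relation I would start from Lemma~\ref{lem:W11} pushed forward to the $x$-plane, $W^{(0)}_{2,0}(x_1,x_2)=-\frac{dY(x_2)}{dx_2}W^{(0)}_{1,1}(x_1|Y(x_2))$, and differentiate both sides with $\frac{\partial}{\partial V_1(x_3)}$. The left-hand side becomes $W^{(0)}_{3,0}(x_1,x_2,x_3)$. On the right-hand side Leibniz produces three contributions: one from the prefactor $\frac{dY(x_2)}{dx_2}$ (using $\frac{\partial}{\partial V_1(x_3)}Y(x_2)=W^{(0)}_{2,0}(x_2,x_3)$), one from the direct index-raising of $W^{(0)}_{1,1}$ into $W^{(0)}_{2,1}(x_1,x_3|Y(x_2))$, and one from the chain-rule action through the argument $Y(x_2)$. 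I would then recognise that the first and third contributions assemble into a single total $x_2$-derivative of $W^{(0)}_{2,0}(x_2,x_3)W^{(0)}_{1,1}(x_1|Y(x_2))$, solve the resulting equation for $W^{(0)}_{2,1}$, convert $\frac{dx_2}{dY(x_2)}\frac{d}{dx_2}$ into $\frac{d}{dY(x_2)}$, and finally eliminate $W^{(0)}_{1,1}$ in favour of $W^{(0)}_{2,0}$ via Lemma~\ref{lem:W11} once more, yielding~\eqref{W122MM}.

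For the second relation I would instead differentiate the cylinder identity~\eqref{W22MM}, $W^{(0)}_{2,0}(x_1,x_2)=\frac{dY(x_1)}{dx_1}\frac{dY(x_2)}{dx_2}W^{(0)}_{0,2}(Y(x_1),Y(x_2))$, with $\frac{\partial}{\partial V_1(x_3)}$. Now the right-hand side is a product of three $V_1$-dependent factors, so Leibniz produces five terms: two from the prefactors $\frac{dY(x_i)}{dx_i}$, two from the chain-rule action through the arguments $Y(x_1)$ and $Y(x_2)$ of $W^{(0)}_{0,2}$, and one from the direct index-raising $W^{(0)}_{0,2}\to W^{(0)}_{1,2}(x_3|Y(x_1),Y(x_2))$. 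This last term is where the first relation re-enters: applying~\eqref{W122MM} with the roles of $x$ and $y$ interchanged rewrites $W^{(0)}_{1,2}$ in terms of $W^{(0)}_{0,3}$ and products of cylinders. After collecting terms, converting the $x_i$-derivatives to $Y(x_i)$-derivatives by the chain rule, re-using~\eqref{W22MM} to turn leftover $W^{(0)}_{0,2}$ factors back into $W^{(0)}_{2,0}$, and multiplying through by $\frac{dx_1}{dY(x_1)}\frac{dx_2}{dY(x_2)}\frac{dx_3}{dY(x_3)}$, the claimed formula for $W^{(0)}_{0,3}$ drops out. The statements on the $z$- and $y$-planes then follow by the substitutions $Y(x_i)=y(z_i),\,x_i=x(z_i)$ and $Y(x_i)=y_i,\,x_i=X(y_i)$.

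The main obstacle I anticipate is the bookkeeping of the chain-rule (``indirect'') contributions: because $Y(x_i)$ depends on $V_1$, one must carefully separate, at each step, the action that raises the correlator index with the $y$-argument held fixed from the action transmitted through the argument $Y(x_i)$, and then reassemble the scattered pieces into compact total-derivative form. Getting the signs and the grouping right — in particular recognising which combinations form exact $\frac{d}{dY(x_i)}$-derivatives — is the delicate part; the algebra itself is routine once this structure is identified.
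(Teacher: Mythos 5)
Your proposal is correct and follows essentially the same route as the paper's proof: the first relation by applying $\frac{\partial}{\partial V_1(x_3)}$ to Lemma \ref{lem:W11} pushed to the $x$-plane and recombining the prefactor and chain-rule terms into a total $x_2$-derivative, and the second by differentiating \eqref{W22MM}, producing the same five Leibniz terms, and eliminating $W^{(0)}_{1,2}$ via the first relation with $x$ and $y$ interchanged. The bookkeeping of direct versus indirect (chain-rule) contributions that you flag as the delicate point is exactly the structure the paper's computation exhibits.
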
 
With Lemma \ref{lem:W11} and Proposition \ref{prop:W32MM}, we observe a general procedure to generate functional relations between $W^{(g)}_{n,m}$. These relations are achieved recursively through an action of a loop insertion operator on a functional relation between $W^{(g)}_{1,0}$ and $W^{(g)}_{0,1}$ . We emphasise that the action of a loop insertion operator was already used in \cite[App. A]{Chekhov:2006vd} to produce some results, however their action of the loop insertion operator was somehow defined on the $z$-plane. Surprisingly, our later Proposition \ref{prop:Wn1} for $W^{(0)}_{n,1}$ is in contradiction to \cite[Theorem A.1]{Chekhov:2006vd} for genus $g=0$. The terms given in \cite[Theorem A.1]{Chekhov:2006vd} consist of a subset of the terms of Proposition \ref{prop:Wn1}.

In the rest of the article, we generalise the situation and treat the loop insertion operator and its action as formal objects. In other words, all results are true for any TR with an existing loop insertion operator and with a spectral curve regular at the ramification points and not coinciding ramifications of $x$ and $y$.

\section{Loop Insertion Operator and Correlators}
The existence of a loop insertion operator for a spectral curve in TR is predicted by deformation theory, see \cite[Theorem 5.2]{Eynard:2007kz}. We will apply the action of the loop insertion operator with eyes on the $x$-$y$ symmetry. \\

Let $\omega^{(g)}_{n,0}$ be the family of meromorphic differentials generated by TR \eqref{eq:trx} from the spectral curve $(\Sigma,x,y,B)$. Changing the role, the spectral curve $(\Sigma,y,x,B)$ defines $\omega^{(g)}_{0,m}$, the family of meromorphic differentials generated by TR \eqref{eq:trx} with $x$ and $y$ interchanged. $x$ and $y$ are coverings of Riemann surfaces with ramification points $\alpha_i$ and $\beta_j$, respectively. We will assume for convenience to have a genus zero spectral curve, which means that $x$ and $y$ have a rational parametrisation.

Recall $\omega^{(0)}_{1,0}(z)=y(z)dx(z)$, $\omega^{(0)}_{0,1}(z)=x(z)dy(z)$ and $\omega^{(0)}_{2,0}(z_1,z_2)=\omega^{(0)}_{0,2}(z_1,z_2)=\frac{dz_1\,dz_2}{(z_1-z_2)^2}$ since we are working with a genus zero spectral curve. We define further 
\begin{align*}
	W^{(g)}_{n,0}(x_1(z_1),...,x_n(z_n)):=&\frac{\omega^{(g)}_{n,0}(z_1,...,z_n)}{dx(z_1)...dx(z_n)}\\
	W^{(g)}_{0,m}(y_1(z_1),...,y_m(z_m)):=&\frac{\omega^{(g)}_{0,m}(z_1,...,z_m)}{dy(z_1)...dy(z_m)}
\end{align*}
with the special case $W^{(0)}_{1,0}(x)=Y(x)$ and $W^{(0)}_{0,1}(y)=X(y)$, where $X(y)$ is by definition the functional inverse of $Y(x)$.

The two coverings $x(z)$ and $y(z)$ are satifying an algebraic equation
\begin{align*}
	E(x(z),y(z))=E(x,Y(x))=E(X(y),y)=0
\end{align*}
depending on the moduli of the spectral curve (see \cite[\S 3.4]{Eynard:2007kz}). Subsequently, if we write (small) $x$ it is understood to be fixed on the spectral curve, which means that (kapital) $Y$ depends on the moduli of the spectral curve. Vice versa, if we write (small) $y$ if is understoot to be fixed on the spectral curve, which means that (kapital) $X$ depends on the moduli of the spectral curve. The veryfies the third property of the upcoming defintion, since all $x_i$ and $y_j$ are fixed on the spectral curve.

Now, we define the main tool for later derivations
\begin{definition}\label{def:op}
	The loop insertion operators $D_x$ and $\tilde{D}_y$ satisfy the Leibniz and chain rule, and the functional equations
	\begin{align}
		&D_xW^{(g)}_{n,0}(x_1,...,x_n)=W^{(g)}_{n+1,0}(x_1,...,x_n,x)\\
		&\tilde{D}_yW^{(g)}_{0,m}(y_1,...,y_m)=W^{(g)}_{0,m+1}(y_1,...,y_m,y)\\
		&D_xx_i=D_xy_j=\tilde{D}_yx_i=\tilde{D}_yy_j=0.\label{Dxx}
	\end{align}
\end{definition}

We give an example for the chain rule 
\begin{example}
	Let $f(x)$ have a non-trivial action of the loop insertion operator, then
	\begin{align*}
		&D_x\big[W^{(g)}_{n,0}(x_1,...,f(x_n))\big]\\
		=&W^{(g)}_{n+1,0}(x_1,...,f(x_n),x)+\frac{d}{d \tilde{x}}W^{(g)}_{n,0}(x_1,...,\tilde{x})\vert_{\tilde{x}=f(x_n)}\cdot D_xf(x_n).
	\end{align*}
	Inserting for instance $f(x)=Y(x)$, we would have
	\begin{align*}
		&D_x\big[W^{(g)}_{n,0}(x_1,...,Y(x_n))\big]\\
		=&W^{(g)}_{n+1,0}(x_1,...,Y(x_n),x)+\frac{d}{d \tilde{x}}W^{(g)}_{n,0}(x_1,...,\tilde{x})\vert_{\tilde{x}=Y(x_n)}\cdot W^{(0)}_{2,0}(x_n,x).
	\end{align*}
\end{example}

To get a well-defined transition from the correlator $W^{(g)}_{n,0}$ to $W^{(g)}_{0,m}$, we need the following assumption
\begin{assumption}\label{Ass}
	The loop insertion operators are commuting
	\begin{align}
		D_x\tilde{D}_y=\tilde{D}_yD_x.
	\end{align}
\end{assumption}
It is not clear if this assumption holds for any spectral curve, this is why it has to be included here. Nevertheless, the assumption is true for the 2-matrix model, since we have two distinct sets of moduli of the spectral curve $(t_i^{(1)})_i$ and $(t_j^{(2)})_j$, where the loop insertion operator $D_x$ is a differential operator wrt to $(t_i^{(1)})_i$ and $\tilde{D}_y$ wrt to $(t_j^{(2)})_j$ (see \eqref{insertion2MM}). Since all $t_i^{(j)}$ are independent parameters, both operators commute.

Due to Assumption \ref{Ass}, the intermediate correlators are well-defined and given by
\begin{definition}\label{def:Wnm}
	Let the correlators $W^{(g)}_{n,m}$ be defined by
	\begin{align*}
		W^{(g)}_{n,m}(x_1,...,x_n|y_1,...,y_m):=&D_{x_2}...D_{x_{n}}\tilde{D}_{y_1}...\tilde{D}_{y_m}W^{(g)}_{1,0}(x_1)\\
		=&D_{x_1}...D_{x_{n}}\tilde{D}_{y_2}...\tilde{D}_{y_{m}}W^{(g)}_{0,1}(y_1).
	\end{align*}
\end{definition}

\begin{example}
	First examples of intermediate correlators are
	\begin{align*}
		W^{(0)}_{1,1}(x|y)=-\frac{dX(y)}{dy}W^{(0)}_{2,0}(x,X(y))
	\end{align*}
\begin{align*}
	W^{(0)}_{2,1}(x_1,x_2|y)=&-\frac{dX(y)}{dy}W^{(0)}_{3,0}(x_1,x_2,X(y))\\\nonumber
	&-\frac{d}{dy}\bigg(\frac{dX(y)}{dy}W^{(0)}_{2,0}(x_2,X(y))W^{(0)}_{2,0}(x_1,X(y))\bigg)
\end{align*}
	which is proved by the properties of the loop insertion operator exactly as in Lemma \ref{lem:W11} and Proposition \ref{prop:W32MM}.
\end{example}

\section{Functional Relation}
\subsection{Trees}
The functional relation has a tree-structure interpretation. To avoid sums of certain restricted sets of subsets, we are going to define the following decorated trees:
\begin{definition}[Trees]\label{def:tree}
	Let $N=\{i_1,...,i_n\}$ and $M=\{j_1,...,j_m\}$.
	Let $\mathcal{G}_{n,m}(N,M)$ be the set of connected trees $T$ with $n$ $\Box$-vertices, $m$ $\circ$-vertices and $\bullet$-vertices, such that 
	\begin{itemize}
		\item the  $\Box$-vertices are labelled by $i_1,...,i_n$
		\item the  $\circ$-vertices are labelled by $j_1,...,j_m$
		\item a $\bullet$-vertex has valence $\geq 2$
		\item edges are only connecting black with white vertices
		\item a  $\Box$-vertex has valence one.
	\end{itemize}
	For a tree $T\in \mathcal{G}_{n,m}(N,M)$, let $r_{j}(T)$ be the valence of the $j^{\text{th}}$ $\circ$-vertex. 
	
	Let $(I,J)$ be a double set associated to a $\bullet$-vertex with $I\subset N$ and $J\subset M$, where $I$ is the set labellings of $\Box$-vertices connected to the $\bullet$-vertex and $J$ the set of labellings of $\circ$-vertices connected to the $\bullet$-vertex. Let $\mathcal{I}(T)$ be the set of all double sets $(I,J)$ for a tree $T\in \mathcal{G}_{n,m}(N,M)$.
\end{definition}
\begin{remark}
	The trees defined by Definition \ref{def:tree} are a generalisation of the set $\mathcal{G}_{n}$ \cite[Def. 3.2]{Borot:2021thu} restricted to Betti number 0, where we additionally allow univalent $\Box$-vertices.  
\end{remark}
We abbreviate $\mathcal{G}_{n,m}=\mathcal{G}_{n,m}(N,M)$, if $N=\{1,...,n\}$ and $M=\{1,...,m\}$.
The set $\mathcal{G}_{n,m}$ is a finite set. It follows from the properties of a connected tree $T\in \mathcal{G}_{n,m}$ that for $m>0$ each $\bullet$-vertex is necessarily connected to at least one  $\circ$-vertex. 
\begin{figure}[htb]
\begingroup%
  \makeatletter%
  \providecommand\color[2][]{%
    \errmessage{(Inkscape) Color is used for the text in Inkscape, but the package 'color.sty' is not loaded}%
    \renewcommand\color[2][]{}%
  }%
  \providecommand\transparent[1]{%
    \errmessage{(Inkscape) Transparency is used (non-zero) for the text in Inkscape, but the package 'transparent.sty' is not loaded}%
    \renewcommand\transparent[1]{}%
  }%
  \providecommand\rotatebox[2]{#2}%
  \newcommand*\fsize{\dimexpr\f@size pt\relax}%
  \newcommand*\lineheight[1]{\fontsize{\fsize}{#1\fsize}\selectfont}%
  \ifx\svgwidth\undefined%
    \setlength{\unitlength}{219.29181719bp}%
    \ifx\svgscale\undefined%
      \relax%
    \else%
      \setlength{\unitlength}{\unitlength * \real{\svgscale}}%
    \fi%
  \else%
    \setlength{\unitlength}{\svgwidth}%
  \fi%
  \global\let\svgwidth\undefined%
  \global\let\svgscale\undefined%
  \makeatother%
  \begin{picture}(1,0.36711738)%
    \lineheight{1}%
    \setlength\tabcolsep{0pt}%
    \put(0.17547636,0.30450958){\makebox(0,0)[lt]{\lineheight{1.25}\smash{\begin{tabular}[t]{l}$1$\end{tabular}}}}%
    \put(0.03322153,0.04432138){\makebox(0,0)[lt]{\lineheight{1.25}\smash{\begin{tabular}[t]{l}$1$\end{tabular}}}}%
    \put(0.31922517,0.03961666){\makebox(0,0)[lt]{\lineheight{1.25}\smash{\begin{tabular}[t]{l}$2$\end{tabular}}}}%
    \put(0,0){\includegraphics[width=\unitlength,page=1]{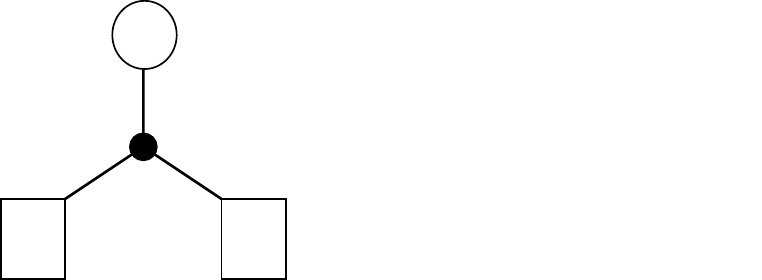}}%
    \put(0.75430227,0.30450955){\makebox(0,0)[lt]{\lineheight{1.25}\smash{\begin{tabular}[t]{l}$1$\end{tabular}}}}%
    \put(0.61118367,0.0400029){\makebox(0,0)[lt]{\lineheight{1.25}\smash{\begin{tabular}[t]{l}$1$\end{tabular}}}}%
    \put(0.90064224,0.03732191){\makebox(0,0)[lt]{\lineheight{1.25}\smash{\begin{tabular}[t]{l}$2$\end{tabular}}}}%
    \put(0,0){\includegraphics[width=\unitlength,page=2]{G12.pdf}}%
  \end{picture}%
\endgroup%

	\caption{The set $\mathcal{G}_{2,1}$ with one  $\circ$-vertex and two  $\Box$-vertices consists of two trees. The $\bullet$-vertex of the first tree $T_1$ is characterised by $\mathcal{I}(T_1)=(I,J)=(\{1,2\},\{1\})$ and the two $\bullet$-vertices of the second tree $T_2$ by $\mathcal{I}(T_2)=\{(I_1,J_1),(I_2,J_2)\}=\{(\{1\},\{1\}),(\{2\},\{1\})\}$}
\end{figure}
\begin{figure}[htb]
	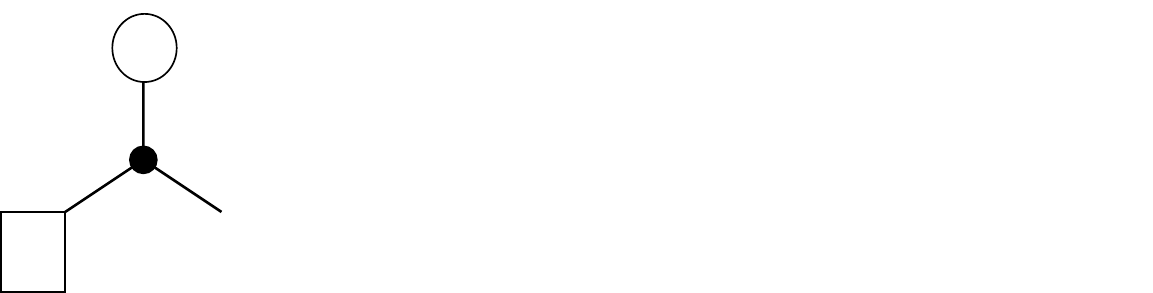
	\caption{The set $\mathcal{G}_{1,2}$ with two  $\circ$-vertices and one  $\Box$-vertex consists of 3 trees. The $\bullet$-vertex of the first tree $T_1$ is characterised by $\mathcal{I}(T_1)=(I,J)=(\{1\},\{1,2\})$, the two $\bullet$-vertices of the second tree $T_2$ by $\mathcal{I}(T_2)=\{(I_1,J_1),(I_2,J_2)\}=\{(\{1\},\{1\}),(\emptyset,\{1,2\})\}$  and the two $\bullet$-vertices of the third tree $T_3$ by $\mathcal{I}(T_3)=\{(I_1,J_1),(I_2,J_2)\}=\{(\emptyset,\{1,2\}),(\{1\},\{2\})\}$} 
\end{figure}

For a tree $T\in \mathcal{G}_{n,m}$, a $\bullet$-vertex is uniquely characterised by the double set $(I,J)$. The set $J$ is a non-empty set (except for $\mathcal{G}_{n,0}$ trees), whereas $I$ could also be empty. The set $\mathcal{I}(T)$ characterises all $\bullet$-vertices and consequently also $T$ uniquely. 

The following example will be of significant interest for later purposes:
\begin{example}\label{ex:Gn1}
	We construct the set $\mathcal{G}_{n,1}\ni T$ with the sets $\mathcal{I}(T)$ of double sets $(I,J)$ corresponding to the $\bullet$-vertices. Any tree $T$ is uniquely characterised by $\mathcal{I}(T)$. Since any $\bullet$-vertex is connected to at least one $\circ$-vertex, it follows that all $\bullet$-vertices are connected to the same $\circ$-vertex labelled by 1. Any $\bullet$-vertex is described by $(I,\{1\})$ with $I\subset \{1,...,n\}$ not empty. If a tree $T\in  \mathcal{G}_{n,1}$ has $k$ $\bullet$-vertices, we have a decomposition $(I_1,\{1\}),(I_2,\{1\}),...,(I_k,\{1\})$, where $I_1\uplus I_2\uplus... \uplus I_k=\{1,...,n\}$ (see Fig \ref{Fig:G31} for $\mathcal{G}_{3,1}$, the left tree has $k=1$ the middle tree $k=2$ and the right $k=3$). Summing over all possibilities, the set $\mathcal{G}_{n,1}$ is isomorphic  to 
	\begin{align*}
		\mathcal{G}_{n,1}\cong \bigg\{\{I_1,...,I_k\}\bigg\vert k\in \{1,...,n\}, I_1\uplus I_2\uplus ...\uplus I_k=\{1,...,n\}\bigg\}\bigg/\sim
	\end{align*}
	where $\{I_1,...,I_k\}\sim \{I'_1,...,I'_k\}$ if $\{I'_1,...,I'_k\}$ is a permutation of $\{I_1,...,I_k\}$ which happens $k!$ times.
	\begin{figure}[htb]
		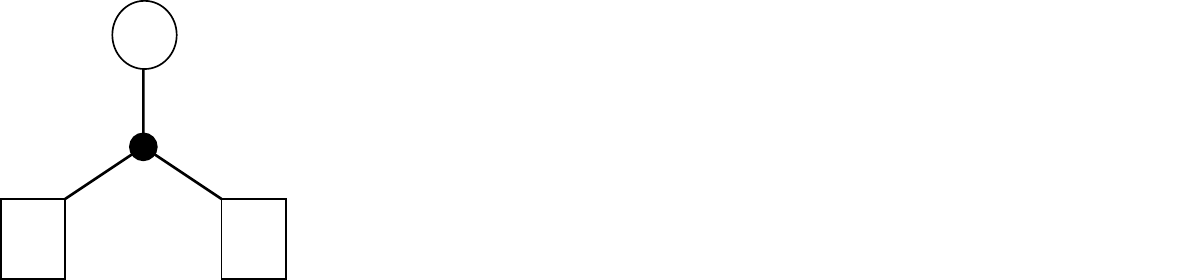
		\caption{The set $\mathcal{G}_{3,1}$ consists of 5 trees. The left tree $T_1$ has the double set $\mathcal{I}(T_1)=(\{1,2,3\},\{1\})$. The middle tree $T_{2,3,4}$ appears three times depending on $i,j$ and $k$, these three cases are $\mathcal{I}(T_2)=\{(\{1,2\},\{1\}),(\{3\},\{1\})\}$ and $\mathcal{I}(T_3)=\{(\{1,3\},\{1\}),(\{2\},\{1\})\}$ and $\mathcal{I}(T_4)=\{(\{2,3\},\{1\}),(\{1\},\{1\})\}$. The right tree $T_5$ is characterised by $\mathcal{I}(T_5)=\{(\{1\},\{1\}),(\{2\},\{1\}),(\{3\},\{1\})\}$.
			\label{Fig:G31}} 
	\end{figure}
\end{example}

The following two lemmata give a recursive construction of the sets $\mathcal{G}_{n+1,m}$ and $\mathcal{G}_{n,m+1}$ depending on $\mathcal{G}_{n',m'}$ with $n'\leq n$ and $m'\leq m$. 
\begin{lemma}[$\mathcal{G}_{n+1,m}$]\label{lem:Gn1m}
	The set $\mathcal{G}_{n+1,m}$ is constructed by connecting to any $T\in \mathcal{G}_{n,m}$ a $\Box$-vertex labelled by $n+1$ either
	\begin{itemize}
		\item to a  $\circ$-vertex $k\in \{1,...,m\}$ via a $\bullet$-vertex of valence two
		\item or to a $\bullet$-vertex characterised by $(I,J)$
	\end{itemize}
	of $T$ in all possible ways.
	
	The $\bullet$-vertex, which is now connected to the  $\Box$-vertex labelled by $n+1$,  is characterised by the double set 
	\begin{itemize}
		\item $(\{n+1\},\{k\})$ if it is two-valent
		\item $(\{I,n+1\},J)$ if it has valence $>2$.
	\end{itemize}
	\begin{proof}
		The connected $\Box$-vertex labelled by $n+1$ is univalent and necessarily connected to a $\bullet$-vertex by Definition \ref{def:tree}. If the $\bullet$-vertex is two-valent, it is further connected to a $\circ$-vertex. If the $\bullet$-vertex has valence $>2$, it is further connected to at least one  $\circ$-vertex. 
		
		\begin{figure}[htb]
\begingroup%
  \makeatletter%
  \providecommand\color[2][]{%
    \errmessage{(Inkscape) Color is used for the text in Inkscape, but the package 'color.sty' is not loaded}%
    \renewcommand\color[2][]{}%
  }%
  \providecommand\transparent[1]{%
    \errmessage{(Inkscape) Transparency is used (non-zero) for the text in Inkscape, but the package 'transparent.sty' is not loaded}%
    \renewcommand\transparent[1]{}%
  }%
  \providecommand\rotatebox[2]{#2}%
  \newcommand*\fsize{\dimexpr\f@size pt\relax}%
  \newcommand*\lineheight[1]{\fontsize{\fsize}{#1\fsize}\selectfont}%
  \ifx\svgwidth\undefined%
    \setlength{\unitlength}{357.01414656bp}%
    \ifx\svgscale\undefined%
      \relax%
    \else%
      \setlength{\unitlength}{\unitlength * \real{\svgscale}}%
    \fi%
  \else%
    \setlength{\unitlength}{\svgwidth}%
  \fi%
  \global\let\svgwidth\undefined%
  \global\let\svgscale\undefined%
  \makeatother%
  \begin{picture}(1,0.15904225)%
    \lineheight{1}%
    \setlength\tabcolsep{0pt}%
    \put(0,0){\includegraphics[width=\unitlength,page=1]{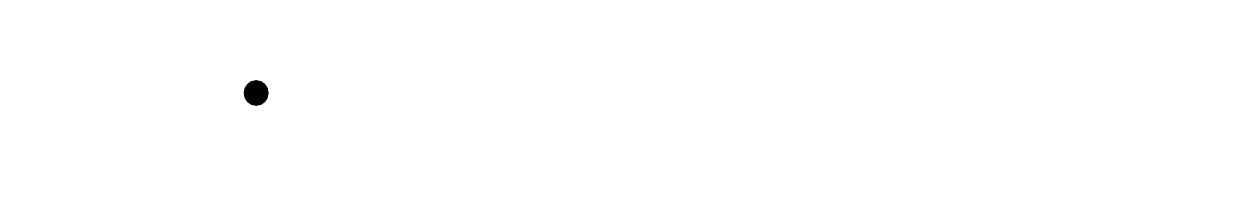}}%
    \put(0.27958804,0.07448303){\makebox(0,0)[lt]{\lineheight{1.25}\smash{\begin{tabular}[t]{l}$n+1$\end{tabular}}}}%
    \put(0,0){\includegraphics[width=\unitlength,page=2]{Gn1m.pdf}}%
    \put(0.09118219,0.0746931){\makebox(0,0)[lt]{\lineheight{1.25}\smash{\begin{tabular}[t]{l}$k$\end{tabular}}}}%
    \put(0,0){\includegraphics[width=\unitlength,page=3]{Gn1m.pdf}}%
    \put(0.8152811,0.07448303){\makebox(0,0)[lt]{\lineheight{1.25}\smash{\begin{tabular}[t]{l}$n+1$\end{tabular}}}}%
    \put(0,0){\includegraphics[width=\unitlength,page=4]{Gn1m.pdf}}%
  \end{picture}%
\endgroup%

			\caption{Any tree of $\mathcal{G}_{n+1,m}$ is constructed by a tree $T\in \mathcal{G}_{n,m}$ by connecting the  $\Box$-vertex labelled by $n+1$ to a  $\circ$-vertex $k$ of $T$ via a two-valent $\bullet$-vertex (left) or to a $\bullet$-vertex of $T$ \label{Fig:Gn1m}} 
		\end{figure}
		Both operations together generate any tree in $\mathcal{G}_{n+1,m}$ exactly once, which can be seen by deletion. 
		
		Surjectivity: Delete the edge of the  $\Box$-vertex labelled by $n+1$ (see Fig. \ref{Fig:Gn1m}) (and delete also its connected $\bullet$-vertex together with its edge if the $\bullet$-vertex was two-valent), a tree of the set $\mathcal{G}_{n,m}$ remains. 
		
		Injectivity: Both operations together generate any tree in $\mathcal{G}_{n+1,m}$ exactly one time, since the previously described deletion generates a unique tree in $\mathcal{G}_{n,m}$.
	\end{proof}
\end{lemma}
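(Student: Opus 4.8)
The plan is to verify that the stated attachment operations set up a bijection between $\mathcal{G}_{n+1,m}$ and the collection, taken over all $T\in\mathcal{G}_{n,m}$, of admissible sites at which a new $\Box$-vertex labelled $n+1$ may be attached. Since every labelled $\Box$-vertex is univalent and, by the bipartite rule of Definition \ref{def:tree}, must be incident to a single $\bullet$-vertex, the whole argument can be organised around the unique edge emanating from the vertex $n+1$.

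First I would isolate the local structure forced on any $T'\in\mathcal{G}_{n+1,m}$: the vertex $n+1$ has valence one and its unique neighbour is a $\bullet$-vertex $v$. This splits cleanly into two cases. Either $v$ is two-valent, in which case connectedness forces its other neighbour to be a $\circ$-vertex, say $k$, yielding the decoration $(\{n+1\},\{k\})$; or $v$ has valence $>2$, in which case deleting $n+1$ leaves $v$ with valence $\geq 2$ and decoration obtained from some $(I,J)$ by enlarging $I$ to $I\cup\{n+1\}$. These are exactly the two alternatives of the statement.

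Next I would make the inverse explicit, since a single deletion map yields both surjectivity and injectivity at once. Given $T'$, delete the edge at $n+1$, and if the liberated $\bullet$-vertex $v$ was two-valent delete $v$ together with its remaining edge. I would then check that the resulting graph $T$ lies in $\mathcal{G}_{n,m}$: connectedness and acyclicity survive because one removes only a leaf, or a leaf together with an adjacent two-valent vertex; the constraint that every $\bullet$-vertex have valence $\geq 2$ is preserved precisely because $v$ is deleted exactly when its valence would otherwise drop to one; and the labelling and $\Box$-univalence conditions are untouched. This gives surjectivity, and injectivity follows since attachment and deletion are mutually inverse.

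The step I expect to be the main obstacle is confirming that the two attachment rules are jointly exhaustive and pairwise disjoint, so that each tree of $\mathcal{G}_{n+1,m}$ is produced exactly once. Exhaustiveness is the case dichotomy above; disjointness follows by noting the different traces left under deletion---the two-valent case removes a $\bullet$-vertex entirely at that site, whereas the higher-valence case leaves a $\bullet$-vertex of valence $\geq 2$---so no $T'$ can be reached by both rules. The underlying tree-combinatorial facts are routine; the care lies entirely in this bookkeeping.
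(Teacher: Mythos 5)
Your proposal is correct and follows essentially the same route as the paper's proof: a case analysis on the valence of the unique $\bullet$-neighbour of the leaf $n+1$, with the deletion of that leaf (and of its $\bullet$-neighbour when two-valent) serving as the explicit inverse that yields surjectivity and injectivity simultaneously. Your additional bookkeeping—checking that deletion preserves membership in $\mathcal{G}_{n,m}$ and that the two attachment rules are disjoint by the trace they leave—only makes explicit what the paper leaves implicit, so no further comment is needed.
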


The construction of $\mathcal{G}_{n,m+1}$ is a bit more tricky, because the additional $\circ$-vertex can have any valence $\geq 1$, whereas the additional $\Box$-vertex in Lemma \ref{lem:Gn1m} had just the valence one due to Definition \ref{def:tree}. The crucial step in constructing $\mathcal{G}_{n,m+1}$ from $\mathcal{G}_{n,m}$ is to recognise the partial substructure of $\mathcal{G}_{k,1}$ of Example \ref{ex:Gn1}.

\begin{lemma}[$\mathcal{G}_{n+1,m}$]\label{lem:Gnm1a}
	The set $\mathcal{G}_{n,m+1}$ is constructed by connecting to any $T\in \mathcal{G}_{n,m}$ a $\circ$-vertex labelled by $m+1$ either
	\begin{itemize}
		\item to a  $\circ$-vertex $j\in \{1,...,m\}$ via a $\bullet$-vertex of valence two
		\item or to the connected parts of a $\bullet$-vertex of valence $k$ through the structure given by $\mathcal{G}_{k,1}$
	\end{itemize}
	in all possible ways.
	
	The $\circ$-vertex labelled by $m+1$ is connected to
	\begin{itemize}
		\item the $\bullet$-vertex of valence two characterised by $(
		\emptyset,\{j,m+1\})$ 
		\item a set of $\bullet$-vertices characterised by an element of $\mathcal{G}_{k,1}$.
	\end{itemize}
	\begin{proof}
		We discuss both cases described in the lemma separately. 
	
		The first case is the analogous situation to Lemma \ref{lem:Gn1m} generated by the full subset of $\mathcal{G}_{n,m+1}$, where the $\circ$-vertex labelled by $m+1$ is connected to a $\bullet$-vertex of valence two.
		
		The second case is more involved. Connecting a $\circ$-vertex through $\bullet$-vertices to $k$ further white vertices (either $\circ$- or $\Box$-vertices) is characterised by the set $\mathcal{G}_{k,1}$. To see this, let a $\bullet$-vertex of valence $k$ be connected to the white vertices of $T_1,...,T_k$ trees (see the upper tree in Fig. \ref{Fig:Gnm1a}). Permuting the subtrees $T_i$ will not change the entire tree. Each of those subtrees $T_i$ can therefore be interpreted as a $\Box$-vertex such that the entire tree has an interpretation as the unique tree in $\mathcal{G}_{k,0}$, since it has as a $\Box$-vertex a valence of one and is not further connected. Now, adding a $\circ$-vertex to the unique vertex in $\mathcal{G}_{k,0}$ generates the set $\mathcal{G}_{k,1}$ described in Example \ref{ex:Gn1} already.  
		\begin{figure}[htb]
		\scalebox{.85}{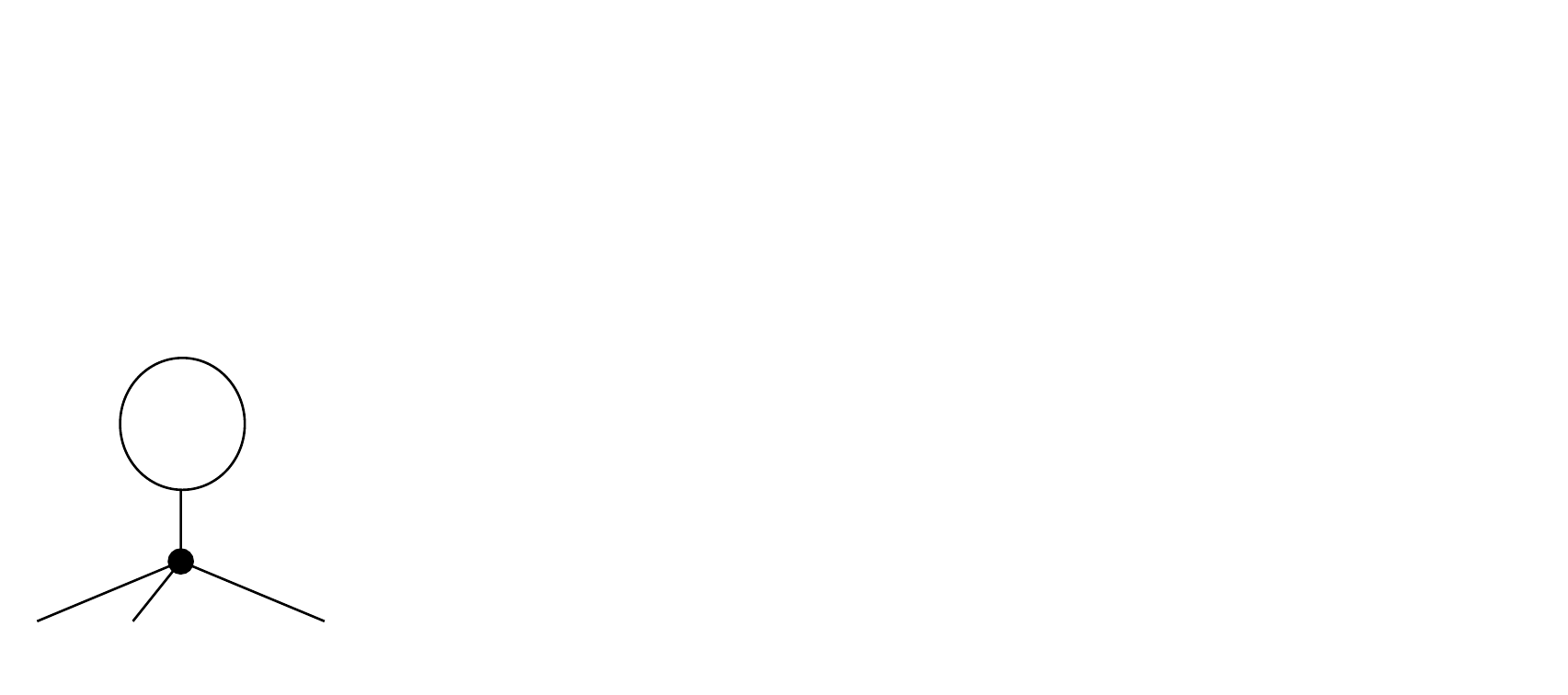}
		\caption{For a tree $T\in \mathcal{G}_{n,m}$, any $\bullet$-vertex of valence $k$ can be interpreted as the unique element of $\mathcal{G}_{k,0}$ where the subtrees $T_i$ are just $\Box$-vertices (upper tree). Adding the $\circ$-vertex labelled by $m+1$ can be interpreted as generating the full set $\mathcal{G}_{k,1}$, where the subtrees $T_i$ remains $\Box$-vertices (below trees).
			\label{Fig:Gnm1a}} 
		\end{figure}
		In particular, all the 
	 	trees below in Fig. \ref{Fig:Gnm1a} are generated, where the $\bullet$-vertices are characterised by the elements in $\mathcal{G}_{k,1}$.
	
		To see that both operations together give a bijection to $\mathcal{G}_{n+1,m}$, we will look at the inverse operation, deletion. 
		
		Surjectivity: Take an element $T\in \mathcal{G}_{n,m+1}$, which is connected through $k'$ $\bullet$-vertices to $k$ white vertices (either $\circ$- or $\Box$-vertices). All $k'$ $\bullet$ vertices are characterised by some double sets $(I_i,\{m+1\})_{i=1,...,k'}$ (if we collect the subtrees to a $\Box$-vertex). The set of all these double sets is $\mathcal{I}(\tilde{T})\in \mathcal{G}_{k,1}$. Next, delete all edges connected to the $\circ$-vertex labelled by $m+1$  (and delete also its connected $\bullet$-vertex together with its edge if the $\bullet$-vertex is now univalent) and merge all $\bullet$-vertices which were connected to the deleted edges. This generates an element in $\mathcal{G}_{n,m}$. 
		
		Injectivity: It might be possible that the previously described deletion operation generates also the upper drawn tree in Fig. \ref{Fig:Gnm1a} with $T_1',...,T_k'$ as a permutation of $T_1,...,T_k$. However, these two trees are the same. 
	\end{proof}
\end{lemma}

The two Lemmata \ref{lem:Gn1m} and $\ref{lem:Gnm1a}$ are the combinatorial constructions in the later proof by induction of the functional relation of $W^{(0)}_{n,m}$. Assume we know the functional relation of $W^{(0)}_{n,m}$, then we will need Lemma \ref{lem:Gn1m} to prove the functional relation of $W^{(0)}_{n+1,m}$, and  we will need Lemma \ref{lem:Gnm1a} to prove the functional relation of $W^{(0)}_{n,m+1}$.

\subsection{Main Theorem}
To prove the main result, it is useful to associate a weight $\phi$ to a tree $T\in \mathcal{G}_{n,m}$ by  
\begin{definition}[Weight]
	Let $T\in \mathcal{G}_{n,m}$. We associate to the $i^{\text{th}}$ $\Box$-vertex the variable $x_i$ and to the $j^{\text{th}}$  $\circ$-vertex the variable $y_j$. Then, we define the weight of a tree by
	\begin{align*}
		\phi(T):=\prod_{j=1}^m\bigg(-\frac{d}{d y_j}\bigg)^{r_j(T)-1}\bigg(\prod_{k=1}^m\bigg(-\frac{dX(y_k)}{dy_k}\bigg)\prod_{(I,J)\in \mathcal{I}(T)}W^{(0)}_{|I|+|J|,0}\big(x_I,X(y_J)\big)\bigg),
	\end{align*}
	where $r_j(T)$ is the valence of the $j^{\text{th}}$ $\circ$-vertex and $\mathcal{I}(T)$ is the set of double sets $(I,J)$ characterising a $\bullet$-vertex, which is connected to the $\Box$-vertices labelled by the elements of $I\subset \{1,...,n\}$ and to the $\circ$-vertices labelled by the elements of $J\subset \{1,...,m\}$.
\end{definition}
The weight $\phi$ associates to any tree $T\in \mathcal{G}_{n,m}$ a function depending on the variables $x_1,...,x_n,y_1,...,y_m$, which is built by the correlators $W^{(0)}_{k,0}$. To prove the general case, we need first of all the following special case
\begin{proposition}\label{prop:Wn1}
	Consider Assumption \ref{Ass}.
	The functional of $W^{(0)}_{n,1}$ reads
	\begin{align}\label{Wn1}
		W^{(0)}_{n,1}(x_1,...,x_n|y)=&\sum_{T\in \mathcal{G}_{n,1}}\frac{\phi(T)}{\mathrm{Aut}(T)}\\\nonumber
		=&-\sum_{T\in \mathcal{G}_{n,1}}\frac{1}{\mathrm{Aut}(T)}\bigg(-\frac{d}{d y}\bigg)^{r(T)-1}\bigg(\frac{dX(y)}{dy}\prod_{(I,\{y\})\in \mathcal{I}(T)}W^{(0)}_{|I|+1,0}\big(x_I,X(y)\big)\bigg)\\\nonumber
		=&\sum_{k=1}^n\sum_{I_1\uplus I_2\uplus ...\uplus I_k=I}\frac{(-1)^k}{k!}\bigg(\frac{d}{d y}\bigg)^{k-1}\bigg(\frac{dX(y)}{dy}\prod_{i=1}^kW^{(0)}_{|I_i|+1,0}\big(x_{I_i},X(y)\big)\bigg),
	\end{align}
	where $r(T)$ is the valence of the $\circ$-vertex of $T$, and we use the abbreviation $x_I=\{x_{i_1},...,x_{i_{|I|}}\}$ for $I=\{i_1,...,i_{|I|}\}$.
\begin{proof}
	The first representation is equivalent to the second by the definition of the weight $\phi$. The second representation is equivalent to the third by the isomorphism described in Example \ref{ex:Gn1}.
	
	The functional relation is proved by induction in $n$. The action of the loop insertion operator will be interpreted graphically such that we can use Lemma \ref{lem:Gn1m}.
	
	The initial case with $n=1$ is true due to Lemma \ref{lem:W11}
	\begin{align*}
		W^{(0)}_{1,1}(x|y)=-\frac{dX(y)}{dy}W^{(0)}_{2,0}(x,X(y)).
	\end{align*}
	
	Assume \eqref{Wn1} is true for $n$, then the loop insertion operator $D_{x_{n+1}}$ acts on the left as
	\begin{align*}
		D_{x_{n+1}}W^{(0)}_{n,1}(x_I|y)=W^{(0)}_{n+1,1}(x_I,x_{n+1}|y).
	\end{align*}
	On the right, the loop insertion operator commutes with all derivatives wrt to $y$ due to the property \eqref{Dxx}. It remains to compute 
	\begin{align}\label{Dx1}
		D_{x_{n+1}}\bigg(\frac{dX(y)}{dy}\prod_{(I,\{y\})\in \mathcal{I}(T)}W^{(0)}_{|I|+1,0}\big(x_I,X(y)\big)\bigg).
	\end{align}
	First, we look at the action of the loop insertion operator $D_{x_{n+1}}$ by chain rule on $X(y)$ in \eqref{Dx1}, which gives
	\begin{align*}
		&\bigg(\frac{dW^{(0)}_{1,1}(x_{n+1}|y)}{dy}\prod_{(I,\{y\})\in \mathcal{I}(T)}W^{(0)}_{|I|+1,0}\big(x_I,X(y)\big)\bigg)\\
		+& \bigg(\frac{dX(y)}{dy}\!\!\!\!\!\!\sum_{(I,\{y\})\in \mathcal{I}(T)}\!\!\!\!\!\!\frac{d W^{(0)}_{|I|+1,0}\big(x_I,\tilde{x}\big)}{d\tilde{x}}\bigg\vert_{\tilde{x}=X(y)}\!\!\!\!\!\! W^{(0)}_{1,1}(x_{n+1}|y)\!\!\!\!\!\!\!\!\!\!\!\! \prod_{(I',\{y\})\in \mathcal{I}(T)\setminus (I,\{y\}) }\!\!\!\!\!\!\!\!\!\!\!\! W^{(0)}_{|I|+1,0}\big(x_I,X(y)\big)\bigg)\\
		=&-\bigg(\frac{d\big(W^{(0)}_{2,0}(x_{n+1},X(y)) \cdot \frac{dX(y)}{dy}\big)}{dy}\prod_{(I,\{y\})\in \mathcal{I}(T)}W^{(0)}_{|I|+1,0}\big(x_I,X(y)\big)\bigg)\\
		-& \bigg(\frac{dX(y)}{dy}\sum_{(I,\{y\})\in \mathcal{I}(T)}\frac{d W^{(0)}_{|I|+1,0}\big(x_I,\tilde{x}\big)}{d\tilde{x}}\bigg\vert_{\tilde{x}=X(y)}\frac{dX(y)}{dy} W^{(0)}_{2,0}(x_{n+1},X(y)) \\
		 &\qquad\qquad \qquad\qquad \qquad \qquad \times \prod_{(I',\{y\})\in \mathcal{I}(T)\setminus (I,\{y\}) }W^{(0)}_{|I|+1,0}\big(x_I,X(y)\big)\bigg)\\
		 =&-\frac{d}{dy}\bigg(\frac{dX(y)}{dy} W^{(0)}_{2,0}(x_{n+1},X(y))\prod_{(I,\{y\})\in \mathcal{I}(T)}W^{(0)}_{|I|+1,0}\big(x_I,X(y)\big)\bigg),
	\end{align*}
	where we have used Lemma \ref{lem:W11}, the chain rule $ \frac{dX(y)}{dy}\frac{d f(\tilde{x})}{d\tilde{x}}\bigg\vert_{\tilde{x}=X(y)}=\frac{f(X(y))}{dy}$ and the Leibniz rule in the last line. 
	
	This computation has got a graphical interpretation due to the weight $\phi$ (see Fig. \ref{Fig:DxcircWn1}).
	\begin{figure}[htb]
\begingroup%
  \makeatletter%
  \providecommand\color[2][]{%
    \errmessage{(Inkscape) Color is used for the text in Inkscape, but the package 'color.sty' is not loaded}%
    \renewcommand\color[2][]{}%
  }%
  \providecommand\transparent[1]{%
    \errmessage{(Inkscape) Transparency is used (non-zero) for the text in Inkscape, but the package 'transparent.sty' is not loaded}%
    \renewcommand\transparent[1]{}%
  }%
  \providecommand\rotatebox[2]{#2}%
  \newcommand*\fsize{\dimexpr\f@size pt\relax}%
  \newcommand*\lineheight[1]{\fontsize{\fsize}{#1\fsize}\selectfont}%
  \ifx\svgwidth\undefined%
    \setlength{\unitlength}{307.36750642bp}%
    \ifx\svgscale\undefined%
      \relax%
    \else%
      \setlength{\unitlength}{\unitlength * \real{\svgscale}}%
    \fi%
  \else%
    \setlength{\unitlength}{\svgwidth}%
  \fi%
  \global\let\svgwidth\undefined%
  \global\let\svgscale\undefined%
  \makeatother%
  \begin{picture}(1,0.13531367)%
    \lineheight{1}%
    \setlength\tabcolsep{0pt}%
    \put(0,0){\includegraphics[width=\unitlength,page=1]{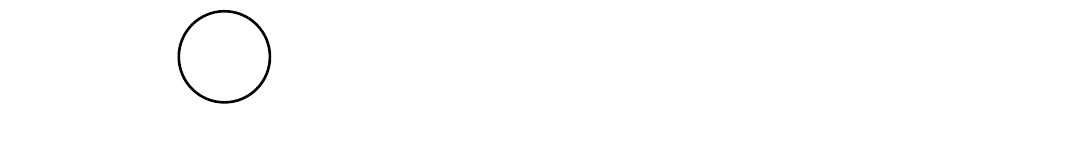}}%
    \put(0.20069077,0.0685921){\makebox(0,0)[lt]{\lineheight{1.25}\smash{\begin{tabular}[t]{l}$y$\end{tabular}}}}%
    \put(0,0){\includegraphics[width=\unitlength,page=2]{DxcircWn1.pdf}}%
    \put(-0.00540121,0.06570345){\makebox(0,0)[lt]{\lineheight{1.25}\smash{\begin{tabular}[t]{l}$D_{x_{n+1}}$\end{tabular}}}}%
    \put(0,0){\includegraphics[width=\unitlength,page=3]{DxcircWn1.pdf}}%
    \put(0.40793019,0.06496271){\makebox(0,0)[lt]{\lineheight{1.25}\smash{\begin{tabular}[t]{l}$=$\end{tabular}}}}%
    \put(0,0){\includegraphics[width=\unitlength,page=4]{DxcircWn1.pdf}}%
    \put(0.61242252,0.07167313){\makebox(0,0)[lt]{\lineheight{1.25}\smash{\begin{tabular}[t]{l}$y$\end{tabular}}}}%
    \put(0,0){\includegraphics[width=\unitlength,page=5]{DxcircWn1.pdf}}%
    \put(0.78544492,0.06501867){\makebox(0,0)[lt]{\lineheight{1.25}\smash{\begin{tabular}[t]{l}$n+1$\end{tabular}}}}%
  \end{picture}%
\endgroup%

		\caption{The action of the loop insertion operator $D_{x_{n+1}}$ on the $\circ$-vertex. A further edge connected to a $\bullet$-vertex is generated which is connected to a $\Box$-vertex labelled by $n+1$.\label{Fig:DxcircWn1}} 
	\end{figure}
	The additional edge connected to the $\circ$-vertex brings an additinal derivative $-\frac{d}{dy}$, the $\bullet$-vertex of valence two generates the factor $W^{(0)}_{2,0}(x_{n+1},X(y))$ and the rest of the tree remains.
	
	Furthermore, the loop insertion operator acts on the $W^{(0)}_{n',0}$ in \eqref{Dx1}, which is just
	\begin{align*}
		\bigg(\frac{dX(y)}{dy}\sum_{(I,\{y\})\in \mathcal{I}(T)} W^{(0)}_{|I|+1,0}\big(x_I,x_{n+1},X(y)\big) \prod_{(I',\{y\})\in \mathcal{I}(T)\setminus (I,\{y\}) }W^{(0)}_{|I|+1,0}\big(x_I,X(y)\big)\bigg).
	\end{align*}
	\begin{figure}[htb]
\begingroup%
  \makeatletter%
  \providecommand\color[2][]{%
    \errmessage{(Inkscape) Color is used for the text in Inkscape, but the package 'color.sty' is not loaded}%
    \renewcommand\color[2][]{}%
  }%
  \providecommand\transparent[1]{%
    \errmessage{(Inkscape) Transparency is used (non-zero) for the text in Inkscape, but the package 'transparent.sty' is not loaded}%
    \renewcommand\transparent[1]{}%
  }%
  \providecommand\rotatebox[2]{#2}%
  \newcommand*\fsize{\dimexpr\f@size pt\relax}%
  \newcommand*\lineheight[1]{\fontsize{\fsize}{#1\fsize}\selectfont}%
  \ifx\svgwidth\undefined%
    \setlength{\unitlength}{273.61750642bp}%
    \ifx\svgscale\undefined%
      \relax%
    \else%
      \setlength{\unitlength}{\unitlength * \real{\svgscale}}%
    \fi%
  \else%
    \setlength{\unitlength}{\svgwidth}%
  \fi%
  \global\let\svgwidth\undefined%
  \global\let\svgscale\undefined%
  \makeatother%
  \begin{picture}(1,0.14963669)%
    \lineheight{1}%
    \setlength\tabcolsep{0pt}%
    \put(-0.00606743,0.07144024){\makebox(0,0)[lt]{\lineheight{1.25}\smash{\begin{tabular}[t]{l}$D_{x_{n+1}}$\end{tabular}}}}%
    \put(0,0){\includegraphics[width=\unitlength,page=1]{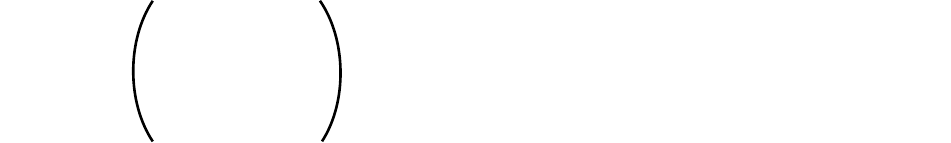}}%
    \put(0.45824731,0.07060813){\makebox(0,0)[lt]{\lineheight{1.25}\smash{\begin{tabular}[t]{l}$=$\end{tabular}}}}%
    \put(0,0){\includegraphics[width=\unitlength,page=2]{Dxbullet.pdf}}%
    \put(0.75898011,0.07067098){\makebox(0,0)[lt]{\lineheight{1.25}\smash{\begin{tabular}[t]{l}$n+1$\end{tabular}}}}%
    \put(0,0){\includegraphics[width=\unitlength,page=3]{Dxbullet.pdf}}%
  \end{picture}%
\endgroup%

		\caption{The action of $D_{x_{n+1}}$ on a $\bullet$-vertex generates an additional edges connected to the $\Box$-vertex labelled by $n+1$.   
			\label{Fig:Dxbullet}} 
	\end{figure}
	This action is interpreted graphically as an action on a $\bullet$-vertex adding an edge connected to the $\Box$-vertex labelled by $n+1$ (see Fig. \ref{Fig:Dxbullet}).
	
	Both actions together generate graphically exactly the construction described in Lemma \ref{lem:Gn1m}, and therefore the set $\mathcal{G}_{n+1,m}$. This finishes the proof.
\end{proof}
\end{proposition}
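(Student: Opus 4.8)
The plan is to prove the functional relation by induction on $n$, with the action of the loop insertion operator $D_{x_{n+1}}$ mirroring the combinatorial tree-construction of Lemma \ref{lem:Gn1m}. First I would dispose of the equivalence of the three displayed expressions, which is immediate: the first equals the second simply by unfolding the weight $\phi(T)$ for a tree $T\in\mathcal{G}_{n,1}$, whose single $\circ$-vertex has valence $r(T)$ equal to the number of its $\bullet$-vertices; and the second equals the third via the explicit description of $\mathcal{G}_{n,1}$ in Example \ref{ex:Gn1}, where a tree with $k$ $\bullet$-vertices corresponds to an unordered set partition $I_1\uplus\cdots\uplus I_k=\{1,\dots,n\}$, and passing to the ordered sum over such partitions accounts for the factor $\tfrac{1}{k!}$. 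The base case $n=1$ is then precisely Lemma \ref{lem:W11}, corresponding to the unique tree in $\mathcal{G}_{1,1}$.

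For the inductive step I would apply $D_{x_{n+1}}$ to the assumed formula for $W^{(0)}_{n,1}$; by Definition \ref{def:Wnm} the left-hand side becomes $W^{(0)}_{n+1,1}$. On the right-hand side, since $D_{x_{n+1}}y=0$ by \eqref{Dxx}, the operator commutes with every $\tfrac{d}{dy}$, so it suffices to differentiate the product
\[
\frac{dX(y)}{dy}\prod_{(I,\{y\})\in\mathcal{I}(T)}W^{(0)}_{|I|+1,0}\big(x_I,X(y)\big)
\]
using the Leibniz and chain rules. Two kinds of contributions appear. When $D_{x_{n+1}}$ hits a factor $W^{(0)}_{|I|+1,0}(x_I,X(y))$ through its explicit arguments, it produces $W^{(0)}_{|I|+2,0}(x_I,x_{n+1},X(y))$; graphically this attaches the new $\Box$-vertex $n+1$ to the existing $\bullet$-vertex indexed by $I$. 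Separately, every occurrence of $X(y)$ — both in $\tfrac{dX(y)}{dy}$ and inside each correlator via the chain rule — contributes a factor $D_{x_{n+1}}X(y)=W^{(0)}_{1,1}(x_{n+1}|y)=-\tfrac{dX(y)}{dy}W^{(0)}_{2,0}(x_{n+1},X(y))$, again by Lemma \ref{lem:W11}.

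The key simplification, which I would carry out next, is that all the terms of the second kind recombine. Using the chain rule in the form $\tfrac{dX(y)}{dy}\,\tfrac{df(\tilde x)}{d\tilde x}\big|_{\tilde x=X(y)}=\tfrac{d\,f(X(y))}{dy}$ together with the Leibniz rule, the contributions from $\tfrac{dX(y)}{dy}$ and from the various correlator arguments collapse into a single extra total derivative $-\tfrac{d}{dy}$ acting on the whole weight, with a new two-valent $\bullet$-vertex carrying $W^{(0)}_{2,0}(x_{n+1},X(y))$ now attached to the $\circ$-vertex. This is exactly the first operation of Lemma \ref{lem:Gn1m} (raising the valence $r(T)$ by one), while the first kind of contribution is exactly its second operation. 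Hence the two contributions together build every tree of $\mathcal{G}_{n+1,1}$ from the trees of $\mathcal{G}_{n,1}$, each exactly once, completing the induction.

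The hard part will be the careful bookkeeping of the second kind of terms: verifying that the chain-rule contributions coming from $\tfrac{dX(y)}{dy}$ and from the several correlator arguments genuinely assemble into one clean $-\tfrac{d}{dy}$, and checking that the resulting weights $\tfrac{1}{\mathrm{Aut}(T)}$ are reproduced correctly — equivalently, that the bijection of Lemma \ref{lem:Gn1m} is weight-preserving, so that no spurious over- or under-counting of automorphisms is introduced when the new $\Box$-vertex is attached. Everything else reduces to routine applications of the Leibniz and chain rules.
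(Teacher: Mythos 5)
Your proposal is correct and follows essentially the same route as the paper's proof: induction on $n$ with Lemma \ref{lem:W11} as the base case, splitting the action of $D_{x_{n+1}}$ into contributions on the correlator factors versus on the occurrences of $X(y)$, recombining the latter via $D_{x_{n+1}}X(y)=W^{(0)}_{1,1}(x_{n+1}|y)$, the chain rule and the Leibniz rule into a single $-\tfrac{d}{dy}$ with a new two-valent $\bullet$-vertex, and matching the two contributions with the two operations of Lemma \ref{lem:Gn1m}. The ``hard part'' you defer is exactly the computation the paper carries out explicitly, and it works as you describe.
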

The special term related to $k=1$ in \eqref{Wn1} can be pulled out. Pulling all variables back to the $z$-plane by $y=y(z)$ and $x_i=x(z_i)$, we get
\begin{align}\label{2MMVergleich}
	&W^{(0)}_{n,1}(x(z_1),...,x(z_i)|y(z))+\frac{dx(z)}{dy(z)}W^{(0)}_{n+1,0}(x(z_1),...,x(z_n),x(z))\\\nonumber
	=&\sum_{k=2}^n\sum_{I_1\uplus I_2\uplus ...\uplus I_k=I}\frac{(-1)^k}{k!}\bigg(\frac{d}{d y(z)}\bigg)^{k-1}\bigg(\frac{dx(z)}{dy(z)}\prod_{i=1}^kW^{(0)}_{|I_i|+1,0}\big(x(z_1),...,x(z_n),x(z)\big)\bigg).
\end{align}

\begin{remark}
	In case of the 2-matrix model, a formula for any $W^{(g)}_{n,1}(x_I|y)$ was provided in \cite[Theorem A.1]{Chekhov:2006vd} which is for $g=0$ in contradiction to Proposition \ref{Wn1}. The proof in \cite{Chekhov:2006vd} uses the loop insertion operator as well, however on the $z$-plane as an action on the residue formula of TR. The loop insertion operator might be not well-defined on the $z$-plane, which causes the difference. We find that Theorem A.1 of \cite{Chekhov:2006vd} for $g=0$ is equal to a subset of the terms of Proposition \ref{prop:Wn1}. More precisely, if we would restrict the valence of the $\circ$-vertices to $\leq 2$ (or equivalently if we just take $k= 2$ in \eqref{2MMVergleich}), we would have the same structure as in \cite[Theorem A.1]{Chekhov:2006vd} for $g=0$.
\end{remark}

The more general case, which needs Proposition \ref{prop:Wn1} as an essential ingredient, is formulated as the main theorem:
\begin{theorem}[Main Theorem]\label{thm:main}
	Consider Assumption \ref{Ass}.
	The functional relation of $W^{(0)}_{n,m}$ reads
	\begin{align}\label{eq:mainthm}
		&W^{(0)}_{n,m}(x_1,...,x_n|y_1,...,y_m)=\sum_{T\in \mathcal{G}_{n,m}}\frac{\phi(T)}{\mathrm{Aut}(T)}\\\nonumber
		=&\sum_{T\in \mathcal{G}_{n,m}}\frac{1}{\mathrm{Aut}(T)}\prod_{j=1}^m\bigg(-\frac{d}{d y_j}\bigg)^{r_j(T)-1}\bigg(\prod_{k=1}^m\bigg(-\frac{dX(y_k)}{dy_k}\bigg)\prod_{(I,J)\in \mathcal{I}(T)}W^{(0)}_{|I|+|J|,0}\big(x_I,X(y_J)\big)\bigg),
	\end{align}
where $r_j(T)$ is the valence of the $j^{\text{th}}$ $\circ$-vertex of $T$, and we use the abbreviation $x_I:=\{x_{i_1},...,x_{i_{|I|}}\}$ and $X(y_J)=\{X(y_{j_1}),...,X(y_{j_{|J|}})\}$ for some $I=\{i_1,...,i_{|I|}\}$ and $J=\{j_1,...,j_{|J|}\}$.
\begin{proof}
	We prove by induction in $n+m\mapsto n+m+1$, where we have to distinguish between $n\mapsto n+1$ and $m\mapsto m+1$. The initial cases are true for $(n,m)=(2,0),(1,1),(0,2)$.
	\begin{itemize}
		\item $W^{(0)}_{n+1,m}$: The step from $W^{(0)}_{n,m}$ to $W^{(0)}_{n+1,m}$ works exactly in the same way as from $W^{(0)}_{n,1}$ to $W^{(0)}_{n+1,1}$ in Proposition \ref{prop:Wn1}. The action of the loop insertion operator $D_{x_{n+1}}$ has to be computed, where we just have to distinguish between the different $y_j$, which means graphically between the different $\circ$-vertices labelled by $j$ (see Fig. \ref{Fig:Wn1mj}).
		\begin{figure}[htb]
\begingroup%
  \makeatletter%
  \providecommand\color[2][]{%
    \errmessage{(Inkscape) Color is used for the text in Inkscape, but the package 'color.sty' is not loaded}%
    \renewcommand\color[2][]{}%
  }%
  \providecommand\transparent[1]{%
    \errmessage{(Inkscape) Transparency is used (non-zero) for the text in Inkscape, but the package 'transparent.sty' is not loaded}%
    \renewcommand\transparent[1]{}%
  }%
  \providecommand\rotatebox[2]{#2}%
  \newcommand*\fsize{\dimexpr\f@size pt\relax}%
  \newcommand*\lineheight[1]{\fontsize{\fsize}{#1\fsize}\selectfont}%
  \ifx\svgwidth\undefined%
    \setlength{\unitlength}{307.36750642bp}%
    \ifx\svgscale\undefined%
      \relax%
    \else%
      \setlength{\unitlength}{\unitlength * \real{\svgscale}}%
    \fi%
  \else%
    \setlength{\unitlength}{\svgwidth}%
  \fi%
  \global\let\svgwidth\undefined%
  \global\let\svgscale\undefined%
  \makeatother%
  \begin{picture}(1,0.13531367)%
    \lineheight{1}%
    \setlength\tabcolsep{0pt}%
    \put(0,0){\includegraphics[width=\unitlength,page=1]{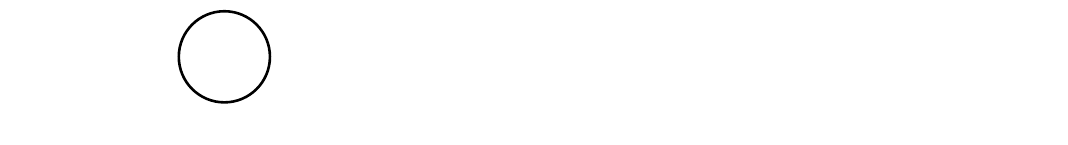}}%
    \put(0.20069077,0.0685921){\makebox(0,0)[lt]{\lineheight{1.25}\smash{\begin{tabular}[t]{l}$j$\end{tabular}}}}%
    \put(0,0){\includegraphics[width=\unitlength,page=2]{Dxcirc.pdf}}%
    \put(-0.00540121,0.06570345){\makebox(0,0)[lt]{\lineheight{1.25}\smash{\begin{tabular}[t]{l}$D_{x_{n+1}}$\end{tabular}}}}%
    \put(0,0){\includegraphics[width=\unitlength,page=3]{Dxcirc.pdf}}%
    \put(0.40793019,0.06496271){\makebox(0,0)[lt]{\lineheight{1.25}\smash{\begin{tabular}[t]{l}$=$\end{tabular}}}}%
    \put(0,0){\includegraphics[width=\unitlength,page=4]{Dxcirc.pdf}}%
    \put(0.61242252,0.07167313){\makebox(0,0)[lt]{\lineheight{1.25}\smash{\begin{tabular}[t]{l}$j$\end{tabular}}}}%
    \put(0,0){\includegraphics[width=\unitlength,page=5]{Dxcirc.pdf}}%
    \put(0.78544492,0.06501867){\makebox(0,0)[lt]{\lineheight{1.25}\smash{\begin{tabular}[t]{l}$n+1$\end{tabular}}}}%
  \end{picture}%
\endgroup%

			\caption{Action of the loop insertion operator on $D_{x_{n+1}}$ on a $\circ$-vertex labelled by $j$.\label{Fig:Wn1mj}} 
		\end{figure}
		The action of $D_{x_{n+1}}$ on the $\bullet$-vertex is descibed in the proof of Proposition \ref{prop:Wn1}. 
		
		Both operation together give the construction of $\mathcal{G}_{n+1,m}$ described in Lemma \ref{lem:Gn1m}.
		
		\item $W^{(0)}_{n,m+1}$: The action of the loop insertion operator $\tilde{D}_y$ will be computed in more detail. Let the loop insertion operator $\tilde{D}_{y_{m+1}}$ act on \eqref{eq:mainthm}, the lhs gives by definition $W^{(0)}_{n,m+1}(x_1,..,x_n|y_1,...,y_m,y_{m+1})$. For the rhs, $\tilde{D}_{y_{m+1}}$ commutes with all $y_j$ derivatives by \eqref{Dxx}, so we have to compute
		\begin{align}\label{proof1}
			\tilde{D}_{y_{m+1}}\bigg(\prod_{k=1}^m\bigg(-\frac{dX(y_k)}{dy_k}\bigg)\prod_{(I,J)\in \mathcal{I}(T)}W^{(0)}_{|I|+|J|,0}\big(x_I,X(y_J)\big)\bigg).
		\end{align}
		First, we look only at the action on some $X(y_j)$ in \eqref{proof1} which is
		\begin{align*}
			&\bigg(-\frac{dW^{(0)}_{0,2}(y_j,y_{m+1})}{dy_j}\bigg)\prod_{k\neq j}\bigg(-\frac{dX(y_k)}{dy_k}\bigg) \prod_{(I,J)\in \mathcal{I}(T)}W^{(0)}_{|I|+|J|,0}\big(x_I,X(y_J)\big)\\
			+& \bigg(\prod_{k=1}^m\bigg(-\frac{dX(y_k)}{dy_k}\bigg)\sum_{\substack{(I,J)\in \mathcal{I}(T)\\ j\in J}} \frac{W^{(0)}_{|I|+|J|,0}\big(x_I,X(y_{J\setminus j}),\tilde{x}\big)}{d\tilde{x}}\bigg\vert_{\tilde{x}=X(y_j)}\cdot W^{(0)}_{0,2}(y_j,y_{m+1})
			\\
			&\qquad\qquad\qquad \times\prod_{(I',J')\in \mathcal{I}(T)\setminus (I,J)}W^{(0)}_{|I'|+|J'|,0}\big(x_{I'},X(y_{J'})\big)\bigg).
		\end{align*}
		Insert $W^{(0)}_{0,2}(y_j,y_{m+1})=\frac{dX(y_i)}{dy_j}\frac{dX(y_{m+1})}{dy_{m+1}}W^{(0)}_{2,0}(X(y_j),X(y_{m+1}))$, use the chain rule $ \frac{dX(y_j)}{dy_j}\frac{d f(\tilde{x})}{d\tilde{x}}\bigg\vert_{\tilde{x}=X(y_j)}=\frac{f(X(y_j))}{dy_j}$ and the Leibniz rule yields
		\begin{align*}
			=\bigg(-\frac{d}{dy_j}\bigg)\bigg(\prod_{k=1}^{m+1}\bigg(-\frac{dX(y_k)}{dy_k}\bigg)W^{(0)}_{2,0}(X(y_j),X(y_{m+1}))\prod_{(I,J)\in \mathcal{I}(T)}W^{(0)}_{|I|+|J|,0}\big(x_I,X(y_J)|\big)\bigg).
		\end{align*}
		This action has a graphical interpretation due to the weight $\phi$ shown in Fig. \ref{Fig:Wnm1j}. 
		\begin{figure}[htb]
\begingroup%
  \makeatletter%
  \providecommand\color[2][]{%
    \errmessage{(Inkscape) Color is used for the text in Inkscape, but the package 'color.sty' is not loaded}%
    \renewcommand\color[2][]{}%
  }%
  \providecommand\transparent[1]{%
    \errmessage{(Inkscape) Transparency is used (non-zero) for the text in Inkscape, but the package 'transparent.sty' is not loaded}%
    \renewcommand\transparent[1]{}%
  }%
  \providecommand\rotatebox[2]{#2}%
  \newcommand*\fsize{\dimexpr\f@size pt\relax}%
  \newcommand*\lineheight[1]{\fontsize{\fsize}{#1\fsize}\selectfont}%
  \ifx\svgwidth\undefined%
    \setlength{\unitlength}{314.17415305bp}%
    \ifx\svgscale\undefined%
      \relax%
    \else%
      \setlength{\unitlength}{\unitlength * \real{\svgscale}}%
    \fi%
  \else%
    \setlength{\unitlength}{\svgwidth}%
  \fi%
  \global\let\svgwidth\undefined%
  \global\let\svgscale\undefined%
  \makeatother%
  \begin{picture}(1,0.13238207)%
    \lineheight{1}%
    \setlength\tabcolsep{0pt}%
    \put(0,0){\includegraphics[width=\unitlength,page=1]{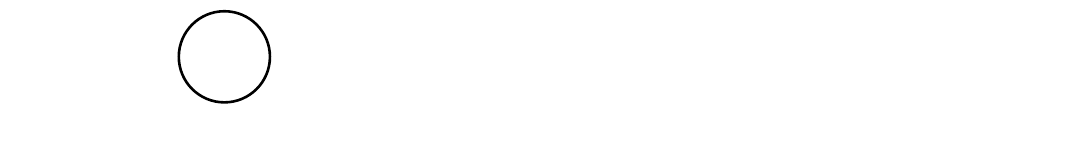}}%
    \put(0.19634276,0.06710604){\makebox(0,0)[lt]{\lineheight{1.25}\smash{\begin{tabular}[t]{l}$j$\end{tabular}}}}%
    \put(0,0){\includegraphics[width=\unitlength,page=2]{Dycirc.pdf}}%
    \put(-0.00528419,0.06427998){\makebox(0,0)[lt]{\lineheight{1.25}\smash{\begin{tabular}[t]{l}$\tilde{D}_{y_{m+1}}$\end{tabular}}}}%
    \put(0,0){\includegraphics[width=\unitlength,page=3]{Dycirc.pdf}}%
    \put(0.3990923,0.06355528){\makebox(0,0)[lt]{\lineheight{1.25}\smash{\begin{tabular}[t]{l}$=$\end{tabular}}}}%
    \put(0,0){\includegraphics[width=\unitlength,page=4]{Dycirc.pdf}}%
    \put(0.59915426,0.07012032){\makebox(0,0)[lt]{\lineheight{1.25}\smash{\begin{tabular}[t]{l}$j$\end{tabular}}}}%
    \put(0,0){\includegraphics[width=\unitlength,page=5]{Dycirc.pdf}}%
    \put(0.7684281,0.06361002){\makebox(0,0)[lt]{\lineheight{1.25}\smash{\begin{tabular}[t]{l}$m+1$\end{tabular}}}}%
    \put(0,0){\includegraphics[width=\unitlength,page=6]{Dycirc.pdf}}%
  \end{picture}%
\endgroup%

			\caption{Action of the loop insertion operator on $\tilde{D}_{y_{m+1}}$ on a $\circ$-vertex labelled by $j$.\label{Fig:Wnm1j}} 
		\end{figure}
		The additional derivative wrt to $y_j$ gives an additional edge at the $\circ$-vertex labelled by $j$. The $\bullet$-vertex of valence two gives the factor $W^{(0)}_{2,0}(X(y_j),X(y_{m+1}))$ since it is connected to two $\circ$-vertices, and the rest remains untouched.
		
		\begin{figure}[htb]
			\scalebox{.85}{
\begingroup%
  \makeatletter%
  \providecommand\color[2][]{%
    \errmessage{(Inkscape) Color is used for the text in Inkscape, but the package 'color.sty' is not loaded}%
    \renewcommand\color[2][]{}%
  }%
  \providecommand\transparent[1]{%
    \errmessage{(Inkscape) Transparency is used (non-zero) for the text in Inkscape, but the package 'transparent.sty' is not loaded}%
    \renewcommand\transparent[1]{}%
  }%
  \providecommand\rotatebox[2]{#2}%
  \newcommand*\fsize{\dimexpr\f@size pt\relax}%
  \newcommand*\lineheight[1]{\fontsize{\fsize}{#1\fsize}\selectfont}%
  \ifx\svgwidth\undefined%
    \setlength{\unitlength}{435.51812327bp}%
    \ifx\svgscale\undefined%
      \relax%
    \else%
      \setlength{\unitlength}{\unitlength * \real{\svgscale}}%
    \fi%
  \else%
    \setlength{\unitlength}{\svgwidth}%
  \fi%
  \global\let\svgwidth\undefined%
  \global\let\svgscale\undefined%
  \makeatother%
  \begin{picture}(1,0.48345673)%
    \lineheight{1}%
    \setlength\tabcolsep{0pt}%
    \put(0.03491024,0.13931257){\makebox(0,0)[lt]{\lineheight{1.25}\smash{\begin{tabular}[t]{l}$m+1$\end{tabular}}}}%
    \put(0,0){\includegraphics[width=\unitlength,page=1]{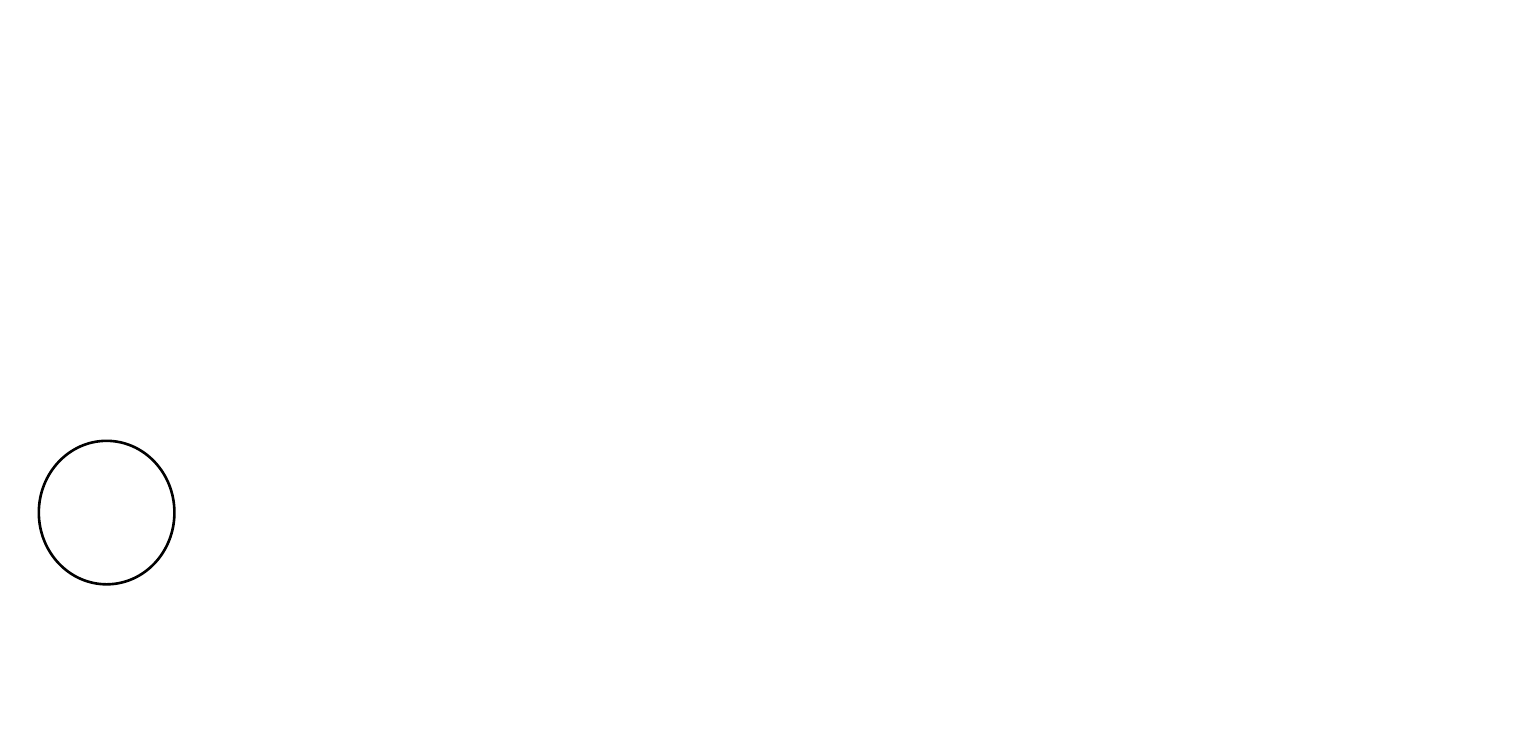}}%
    \put(0.3793276,0.13931254){\makebox(0,0)[lt]{\lineheight{1.25}\smash{\begin{tabular}[t]{l}$m+1$\end{tabular}}}}%
    \put(0,0){\includegraphics[width=\unitlength,page=2]{Gnm1Proof.pdf}}%
    \put(0.86151193,0.13931257){\makebox(0,0)[lt]{\lineheight{1.25}\smash{\begin{tabular}[t]{l}$m+1$\end{tabular}}}}%
    \put(0,0){\includegraphics[width=\unitlength,page=3]{Gnm1Proof.pdf}}%
    \put(0.18030632,0.42370948){\makebox(0,0)[lt]{\lineheight{1.25}\smash{\begin{tabular}[t]{l}$\tilde{D}_{y_{m+1}}$\end{tabular}}}}%
    \put(0,0){\includegraphics[width=\unitlength,page=4]{Gnm1Proof.pdf}}%
  \end{picture}%
\endgroup%
}
			\caption{Action of the loop insertion operator on $\tilde{D}_{y_{m+1}}$ on a $\bullet$-vertex of valence $k$ generates the set $\mathcal{G}_{k,1}$.\label{Fig:Gnm1Proof}} 
		\end{figure}
		Second, we look at the action of $\tilde{D}_{y_{m+1}}$ in \eqref{proof1} on some $W^{(0)}_{n',0}$ which is
		\begin{align*}
			\prod_{k=1}^m\bigg(-\frac{dX(y_k)}{dy_k}\bigg)W^{(0)}_{|I|+|J|,1}\big(x_{I},X(y_{J})|y_{m+1}\big) \prod_{(I',J')\in \mathcal{I}(T) \setminus (I,J)}W^{(0)}_{|I'|+|J'|,0}\big(x_{I'},X(y_{J'})\big).
		\end{align*}
	
		A function $W^{(0)}_{|I|+|J|,1}\big(x_{I},X(y_{J})|y_{m+1}\big)$ is generated which was proved in Proposition \ref{prop:Wn1} to be graphically described by the set $\mathcal{G}_{|I|+|J|,1}$ (see Fig. \ref{Fig:Gnm1Proof}). 
		
		All variables $x_I$ and $X(y_J)$ are treated equivalently as a kind of $\Box$-vertices. 
		
		Lemma \ref{lem:Gnm1a} proves that these two actions of $\tilde{D}_{y_{m+1}}$ together construct the full set $\mathcal{G}_{n,m+1}$ uniquely. 
	\end{itemize}
\end{proof}
\end{theorem}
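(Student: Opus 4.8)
The plan is to establish \eqref{eq:mainthm} by induction on $n+m$, using the commuting loop insertion operators $D_x$ and $\tilde D_y$ (whose compatibility is guaranteed by Assumption \ref{Ass}, making the intermediate $W^{(0)}_{n,m}$ of Definition \ref{def:Wnm} well-defined) to pass from one correlator to the next, and then translating every term produced by the Leibniz and chain rules into a graphical operation on the trees of $\mathcal{G}_{n,m}$. The two recursive descriptions of the tree sets, Lemma \ref{lem:Gn1m} and Lemma \ref{lem:Gnm1a}, are designed to supply exactly the combinatorics needed to match the differentiated sum with the proposed sum over $\mathcal{G}_{n+1,m}$ or $\mathcal{G}_{n,m+1}$, automorphism weights included.

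First I would pin down the base cases. For $(n,m)=(2,0)$ and $(0,2)$ the statement reduces to $\omega^{(0)}_{2,0}=\omega^{(0)}_{0,2}$ together with the chain-rule identity \eqref{W22MM}, and for $(n,m)=(1,1)$ it is precisely Lemma \ref{lem:W11}. These anchor the induction. I would also keep Proposition \ref{prop:Wn1}, which already resolves the entire family $W^{(0)}_{n,1}$, available as a black box, since it is the crucial input for the more delicate direction below.

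For the step $n\mapsto n+1$ I would apply $D_{x_{n+1}}$ to both sides of \eqref{eq:mainthm}. On the left this gives $W^{(0)}_{n+1,m}$ by Definition \ref{def:Wnm}; on the right it commutes past every $y_j$-derivative by \eqref{Dxx}, so only the product rule on the weight $\phi(T)$ matters. Differentiating an $X(y_k)$ prefactor through the chain rule yields a two-valent $\bullet$-vertex carrying $W^{(0)}_{2,0}(x_{n+1},X(y_k))$ attached to a new $\Box$-vertex labelled $n+1$, while differentiating a correlator factor $W^{(0)}_{|I|+|J|,0}$ absorbs $x_{n+1}$ as an extra argument, i.e. as an extra edge on the corresponding $\bullet$-vertex. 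These are exactly the two moves of Lemma \ref{lem:Gn1m}, so the sum reorganises into the sum over $\mathcal{G}_{n+1,m}$ with the correct $1/\mathrm{Aut}(T)$. This direction is essentially a relabelling of the computation already carried out in Proposition \ref{prop:Wn1}.

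The hard part is the step $m\mapsto m+1$, where I would apply $\tilde D_{y_{m+1}}$. The action on an $X(y_j)$ (in a prefactor or as an argument) is still elementary: inserting $W^{(0)}_{0,2}(y_j,y_{m+1})=\frac{dX(y_j)}{dy_j}\frac{dX(y_{m+1})}{dy_{m+1}}W^{(0)}_{2,0}(X(y_j),X(y_{m+1}))$ and using the chain rule produces a two-valent $\bullet$-vertex joining the $\circ$-vertices $j$ and $m+1$, together with one extra $y_j$-derivative. The genuinely delicate contribution is the action on a correlator factor, where $\tilde D_{y_{m+1}}$ turns $W^{(0)}_{|I|+|J|,0}$ into $W^{(0)}_{|I|+|J|,1}(x_I,X(y_J)\mid y_{m+1})$; here I must invoke Proposition \ref{prop:Wn1} to re-expand this as a sum over $\mathcal{G}_{|I|+|J|,1}$, treating all $x_I$ and $X(y_J)$ uniformly as $\Box$-type leaves. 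The main obstacle is to verify that these two moves together realise \emph{precisely} the recursive construction of Lemma \ref{lem:Gnm1a}, and in particular that every tree of $\mathcal{G}_{n,m+1}$ is generated exactly once with the right automorphism weight: the subtlety is that the new $\circ$-vertex may have arbitrary valence, so a single insertion unfolds into the entire local fan encoded by $\mathcal{G}_{k,1}$. I would close the argument by appealing to the bijectivity established in Lemma \ref{lem:Gnm1a} (surjectivity via edge deletion and injectivity via the identification of permuted subtrees), which is exactly what reconciles the $1/\mathrm{Aut}(T)$ bookkeeping on the two sides.
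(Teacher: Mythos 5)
Your proposal is correct and follows essentially the same route as the paper's own proof: induction on $n+m$ anchored at $(2,0),(1,1),(0,2)$, with the step $n\mapsto n+1$ reduced to the computation of Proposition \ref{prop:Wn1} and matched to Lemma \ref{lem:Gn1m}, and the step $m\mapsto m+1$ handled by splitting the action of $\tilde{D}_{y_{m+1}}$ into its effect on the $X(y_j)$ factors (two-valent $\bullet$-vertex plus an extra $y_j$-derivative) and on the correlator factors (re-expanded via Proposition \ref{prop:Wn1} as a local $\mathcal{G}_{k,1}$ fan), then invoking the bijection of Lemma \ref{lem:Gnm1a}. The only difference is presentational: you spell out the base cases and the automorphism-weight bookkeeping slightly more explicitly than the paper does.
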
 
 We emphasise that the formula of the theorem can be pulled back to the $z$-plane writing $x_i=x(z_i)$ and $y_j=y(w_j)$, where $x(z)$ and $y(w)$ are the two coverings corresponding to the spectral curve $(\Sigma,x,y,B)$. The variables $z_i$ and $w_j$ live on the $z$-plane and are independent of each other. Theorem \ref{thm:main} becomes
\begin{align}\label{Wzp}
	&W^{(0)}_{n,m}(x(z_1),...,x(z_n)|y(w_1),...,y(w_m))\\\nonumber
	=&\sum_{T\in \mathcal{G}_{n,m}}\frac{1}{\mathrm{Aut}(T)}\prod_{j=1}^m\bigg(-\frac{d}{d y(w_j)}\bigg)^{r_j(T)-1}\bigg(\prod_{k=1}^m\bigg(-\frac{dx(w_k)}{dy(w_k)}\bigg)\prod_{(I,J)\in \mathcal{I}(T)}W^{(0)}_{|I|+|J|,0}\big(x(z_I),x(w_J)\big)\bigg).
\end{align}

The special case of $n=0$ was already stated in Theorem \ref{thm:first}.
We know from the theory of TR that $W^{(0)}_{0,m}(y(w_1))$ with $m>2$ has just poles at the ramification point $\beta_i$ of $y(w)$. However, the rhs of \eqref{Wzp} is built by $W^{(0)}_{n',0}(x(w_{i_1}),...,x(w_{i_{n'}}))$ which has poles at the ramification points $\alpha_j$ of $x(w)$ and at the antidiagonal. Consequently, performing the sum over all trees $\mathcal{G}_{0,m}$ cancels those poles, and the derivatives wrt to $y(w_j)$ generate the correct poles at the ramification points of $y(w)$.

\subsection{Examples}
To get a feeling what Theorem \ref{thm:main} or \ref{thm:first} are actually telling, we look at examples
\subsubsection*{Airy Curve}
Take the Airy spectral curve as a prime example defined by $(\mathbb{P}^1,x(z)=z^2,y(z)=z,\frac{dz_1\, dz_2}{(z_1-z_2)^2})$. Following the algorithm of TR \eqref{eq:trx}, we have
\begin{align*}
	\omega^{(0)}_{1,0}(z)=&2z^2 dz\\
	\omega^{(0)}_{2,0}(z_1,z_2)=&\frac{dz_1\, dz_2}{(z_1-z_2)^2}\\
	\omega^{(0)}_{3,0}(z_1,z_2,z_3)=&-\frac{dz_1dz_2dz_3}{2z_1^2z_2^2z_3^2}\\
	\omega^{(0)}_{4,0}(z_1,z_2,z_3,z_4)=&\frac{3dz_1dz_2dz_3dz_4}{4z_1^2z_2^2z_3^2z_4^2}\sum_{i=1}^4\frac{1}{z_i^2}
\end{align*} 
as well as 
\begin{align*}
	\omega^{(0)}_{0,1}(z)=&z^2 dz\\
	\omega^{(0)}_{0,2}(z_1,z_2)=&\frac{dz_1\, dz_2}{(z_1-z_2)^2}
\end{align*}
and $\omega^{(0)}_{0,m}=0$ for $m>2$, since $y(z)=z$ has no ramification point.

For $m=3$, $\mathcal{G}_{0,3}$ consists of $4$ trees and Theorem \ref{thm:first} becomes
\begin{align*}
	0=&W^{(0)}_{0,3}(y(z_1),y(z_2),y(z_3))\\
	=&\frac{-\prod_{i=1}^3(-2z_i)}{16z_1^3z_2^3z_3^3}-\frac{d}{dz_1}\bigg(\frac{-8 z_1z_2z_3}{4z_1z_2(z_1-z_2)^24z_1z_3(z_1-z_3)^2}\bigg)\\-
	&\frac{d}{dz_2}\bigg(\frac{-8 z_1z_2z_3}{4z_1z_2(z_1-z_2)^24z_2z_3(z_2-z_3)^2}\bigg)
	-\frac{d}{dz_3}\bigg(\frac{-8 z_1z_2z_3}{4z_3z_2(z_3-z_2)^24z_1z_3(z_1-z_3)^2}\bigg)
\end{align*}
which cancels exactly.

For $m=4$, $\mathcal{G}_{0,4}$ consists of 29 trees, which can be constructed via Lemma \ref{lem:Gnm1a} from the 4 trees of $\mathcal{G}_{0,3}$. Let $J=\{1,2,3,4\}$, it can be written as
\begin{align*}
	0=&W^{(0)}_{0,4}(y(z_J))\\
	=&W^{(0)}_{4,0}(x(z_J))
	-\sum_{i=1}^4\sum_{\substack{I_1\uplus I_2=J\setminus \{i\}\\ I_i\neq \emptyset}}\frac{1}{2!}\frac{d}{dz_i}\bigg(\prod_{k=1}^4(-2z_k)W^{(0)}_{|I_1|+1,0}(x(z_{I_1}),x(z_i))W^{(0)}_{|I_2|+1,0}(x(z_{I_2}),x(z_i))\bigg)\\
	+\sum_{\substack{i,j=1\\ i\neq j}}^4&\sum_{\substack{I_1\uplus I_2=J\setminus \{i,j\}\\ I_i\neq \emptyset}}\frac{1}{2!}\frac{d^2}{dz_idz_j}\bigg(\prod_{k=1}^4(-2z_k)W^{(0)}_{2,0}(x(z_i),x(z_j))W^{(0)}_{2,0}(x(z_{I_1}),x(z_i))W^{(0)}_{2,0}(x(z_{I_2}),x(z_j))\bigg)\\
	+\sum_{\substack{i=1}}^4&\sum_{\substack{I_1\uplus I_2\uplus I_3=J\setminus \{i\}\\ I_i\neq \emptyset}}\frac{1}{3!}\frac{d^2}{dz_i^2}\bigg(\prod_{k=1}^4(-2z_k)W^{(0)}_{2,0}(x(z_{I_1}),x(z_i))W^{(0)}_{2,0}(x(z_{I_2}),x(z_i))W^{(0)}_{2,0}(x(z_{I_3}),x(z_i))\bigg)
\end{align*}
which can be checked easily by computer algebra that it is true.

\section{Equivalence to other Results}\label{Sec:eq}
A functional relation between $W^{(g)}_{n,0}$ and $W^{(g)}_{0,m}$ was recently computed by the theory of double Hurwitz numbers, which is restricted to a certain class of spectral curves \cite[Theorem 3.4]{Borot:2021thu}. In case of $g=0$, the formula simplifies significantly, nevertheless it remains much more complicated than Theorem \ref{thm:first}. To avoid unnecessary definitions, we rephrase the results given on \cite[page 15]{Borot:2021thu} in our notation (we have switched the role of $x$ and $y$)
\begin{theorem}[\cite{Borot:2021thu}]
	The functional relation of $W^{(0)}_{0,m}$ reads
	\begin{align*}
		&W^{(0)}_{0,m}(y(z_1),...,y(z_m))\\
		=&\sum_{T\in \mathcal{G}_{0,m}} \frac{1}{\mathrm{Aut}(T)}\prod_{i=1}^m\frac{\sum_{k\geq 0}(-y(z_i)\frac{d}{dy(z_i)})^k (-\frac{y(z_i)dx(z_i)}{x(z_i)dy(z_i)})[v^k]\big(\partial_p+\frac{v}{p}\big)^{r_i(T)-1}\cdot 1\vert_{p=x(z_i)y(z_i)}x(z_i)^{r_i(T)}}{y(z_i)}\\
		&\qquad \times\prod_{(\emptyset,I)\in \mathcal{I}(T)} W^{(0)}_{|I|,0}(x(z_I)),
	\end{align*}
where the differential operator $(-y(z_i)\frac{d}{dy(z_i)})$ is acting on everything to its right and $[v^k]f(v)$ is the $k^{\text{th}}$ coefficient of $f(v)$.
\end{theorem}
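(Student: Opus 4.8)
The plan is to prove this not from scratch but by establishing its equivalence to Theorem~\ref{thm:first}. Both formulas are sums over the very same index set $\mathcal{G}_{0,m}$, carry the same symmetry factor $1/\mathrm{Aut}(T)$, and contain the identical product $\prod_{(\emptyset,I)\in\mathcal{I}(T)}W^{(0)}_{|I|,0}(x(z_I))$ of correlators attached to the $\bullet$-vertices. Since the $y_i$-derivatives in both formulas act on this shared product through the chain rule (recall $x(z_i)=X(y(z_i))$), and since operators in distinct variables $y_i$ commute, it suffices to match the two weight operators \emph{one $\circ$-vertex at a time}, treating the remaining correlators as an arbitrary test function $P$. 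Thus the whole claim reduces to a single operator identity in one variable $y$, with $r:=r_i(T)$ the valence of the vertex.

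First I would simplify the inner object $(\partial_p+\tfrac vp)^{r-1}\cdot 1$. Observing the conjugation $\partial_p+\tfrac vp=p^{-v}\partial_p\,p^{v}$ (with $v$ a formal parameter), one gets
\[
\big(\partial_p+\tfrac vp\big)^{r-1}\!\cdot 1=p^{-v}\partial_p^{\,r-1}p^{v}=p^{1-r}\,(v)_{r-1},
\]
where $(v)_{r-1}=v(v-1)\cdots(v-r+2)$ is the falling factorial. Hence $[v^k]$ produces exactly the signed Stirling number of the first kind $s(r-1,k)$, and after setting $p=x(z_i)y(z_i)$ the scalar becomes $s(r-1,k)\,(xy)^{1-r}$. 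The decisive structural consequence is that the operator sum collapses to a falling factorial of the Euler operator,
\[
\sum_{k\geq0}s(r-1,k)\Big(-y\tfrac{d}{dy}\Big)^{k}=(\theta)_{r-1},\qquad \theta:=-y\tfrac{d}{dy},
\]
a product of $r-1$ factors $\theta(\theta-1)\cdots(\theta-r+2)$.

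Next I would collect the prefactors. Feeding the above into the per-vertex weight and using $-\tfrac{y\,dx}{x\,dy}\,(xy)^{1-r}x^{r}=-\tfrac{dx}{dy}\,y^{2-r}$ together with the outer $1/y$, the vertex contribution becomes $\tfrac1y(\theta)_{r-1}\big[y^{1-(r-1)}Q\big]$ with $Q:=-\tfrac{dx}{dy}\,P$. The remaining ingredient is the classical identity $y^{-c}\theta\,y^{c}=\theta-c$, which telescopes the falling factorial into
\[
(\theta)_{n}=(-1)^{n}\,y\,\frac{d^{n}}{dy^{n}}\circ y^{\,n-1},\qquad n:=r-1,
\]
(as a composition of operators, with $y^{n-1}$ acting by multiplication). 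Substituting this makes the weights $y^{\pm(n-1)}$ and the $1/y$ cancel, leaving precisely $\big(-\tfrac{d}{dy}\big)^{r-1}\big(-\tfrac{dx}{dy}\,P\big)$, which is the vertex-$i$ factor of Theorem~\ref{thm:first}. Summing over trees then yields the claimed equality of the two formulas.

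The hard part is not any single computation but the bookkeeping: correctly reading the operator-ordering convention of \cite{Borot:2021thu} (the derivatives acting on \emph{everything} to their right, including $x(z_i)=X(y(z_i))$ and the trailing correlators), and justifying the per-vertex factorization despite the $y_i$-derivatives being coupled through shared $W^{(0)}_{|I|,0}$. Once the two weight operators are shown to agree as operators on an arbitrary $P$, the interpretation of Section~\ref{Sec:eq} follows for free: the spurious poles at the zeros of $y$ visible in the $\ln y$-type representation of \cite{Borot:2021thu} are an artefact of the Euler-operator form $(\theta)_{r-1}$, and the identity $(\theta)_{n}=(-1)^{n}\,y\,\partial_y^{\,n}\circ y^{\,n-1}$ is exactly what converts them into the honest derivatives $\big(-\tfrac{d}{dy}\big)^{r-1}$ of Theorem~\ref{thm:first}, whose only poles sit at the ramification points $dy=0$.
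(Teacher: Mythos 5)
Your proposal is correct, and its overall skeleton coincides with what the paper does in Section \ref{Sec:eq}: the cited theorem of \cite{Borot:2021thu} is matched against Theorem \ref{thm:first} (which the paper proves independently via the loop insertion operator, as the $n=0$ case of Theorem \ref{thm:main}), and the matching is reduced to a single-variable operator identity applied once per $\circ$-vertex --- exactly the content of Proposition \ref{pro:equiv} together with Lemma \ref{lem:stirling}. Where you genuinely diverge is in how that identity is established. The paper runs two inductions: Lemma \ref{lem:stirling} proves $\big(\partial_p+\frac{v}{p}\big)^{r}\cdot 1=p^{-r}\sum_{k}s_{r,k}v^k$ from the recursion $s_{r+1,k}=s_{r,k-1}-rs_{r,k}$, and Proposition \ref{pro:equiv} is then proved by a second induction on $r$ invoking that same recursion plus a closing Leibniz-rule identity. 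You instead obtain both facts in closed form by conjugation: $\partial_p+\frac{v}{p}=p^{-v}\partial_p\,p^{v}$ collapses the inner object to the falling factorial $(v)_{r-1}\,p^{1-r}$ (whose coefficients are precisely the paper's signed Stirling numbers, so the two versions agree), and $y^{-c}\theta\,y^{c}=\theta-c$ with $\theta:=-y\frac{d}{dy}$ telescopes $(\theta)_{r-1}$ into $(-1)^{r-1}y\,\frac{d^{r-1}}{dy^{r-1}}\circ y^{r-2}$, after which the powers of $y$ cancel and the vertex weight $\big(-\frac{d}{dy}\big)^{r-1}\big(-\frac{dx}{dy}\,P\big)$ of Theorem \ref{thm:first} drops out; I checked these operator identities and they are correct. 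What your route buys is a conceptual explanation of why the Euler-operator ($\ln y$-type) weights of \cite{Borot:2021thu} are secretly plain $y$-derivatives, hence why the apparent poles at zeros of $y$ cancel; the paper's inductions are more pedestrian but fully self-contained. One further merit of your write-up: the per-vertex factorization (that it suffices to prove the identity against an arbitrary test function $P$, using commutativity of the operators attached to distinct variables $y_i$) is stated and justified explicitly, whereas the paper applies Proposition \ref{pro:equiv} vertex-by-vertex without comment.
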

The equivalence to Theorem \ref{thm:first} is realised by the following proposition
\begin{proposition}\label{pro:equiv}
	The following simplification holds
	\begin{align*}
		&\frac{1}{y(z)}\sum_{k\geq 0}\bigg(-y(z)\frac{d}{dy(z)}\bigg)^k \frac{y(z)}{x(z)}[v^k]\bigg(\partial_p+\frac{v}{p}\bigg)^{r-1}\cdot 1\vert_{p=x(z)y(z)}x(z)^{r}f(z)\\
		=&(-1)^{r-1}\frac{d^{r-1}}{dy(z)^{r-1}}f(z).
	\end{align*}
\end{proposition}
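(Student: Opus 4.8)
The plan is to evaluate the inner operator expression in closed form, collapse the whole left-hand side to a composition of Euler operators, and then match it to $\frac{d^{r-1}}{dy^{r-1}}$ by a conjugation identity. Throughout I write $x=x(z)$, $y=y(z)$ and $p=xy$, and treat every quantity as a local function of $y$ via $z$, so that $\frac{d}{dy}=\frac{1}{y'(z)}\frac{d}{dz}$ and all manipulations are purely formal identities of differential operators in $y$.

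First I would compute $\big(\partial_p+\frac{v}{p}\big)^{r-1}\cdot 1$ using the elementary conjugation identity $\big(\partial_p+\frac{v}{p}\big)g=p^{-v}\partial_p\big(p^{v}g\big)$, which telescopes to $\big(\partial_p+\frac{v}{p}\big)^{r-1}\cdot 1=p^{-v}\partial_p^{\,r-1}p^{v}=v^{\underline{r-1}}\,p^{\,1-r}$, where $v^{\underline{r-1}}=v(v-1)\cdots(v-r+2)$ is the falling factorial. Extracting the coefficient of $v^k$ then produces the signed Stirling numbers of the first kind $s(r-1,k)$, defined by $v^{\underline{r-1}}=\sum_k s(r-1,k)v^k$, so that $[v^k]\big(\partial_p+\frac{v}{p}\big)^{r-1}\cdot 1=s(r-1,k)\,(xy)^{1-r}$ after setting $p=xy$. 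Substituting this into the left-hand side and simplifying the algebraic prefactor, $\frac{y}{x}(xy)^{1-r}x^{r}=y^{2-r}$, collapses the bracketed expression to $s(r-1,k)\,y^{2-r}f(z)$, so the left-hand side becomes $\frac{1}{y}\sum_{k\ge 0}s(r-1,k)\,(-\theta)^{k}\big(y^{2-r}f\big)$ with $\theta:=y\frac{d}{dy}$.

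The next step is to recognize the operator sum. Since the $s(r-1,k)$ are constants, $\sum_{k\ge0}s(r-1,k)(-\theta)^k$ is exactly the falling-factorial polynomial evaluated at $-\theta$, namely $(-\theta)^{\underline{r-1}}=(-1)^{r-1}\theta(\theta+1)\cdots(\theta+r-2)$. I would then use two standard Euler-operator facts: the commutation rule $P(\theta)\,y^{a}=y^{a}\,P(\theta+a)$ for any polynomial $P$, which after the shift $a=2-r$ turns the rising factorial $\theta(\theta+1)\cdots(\theta+r-2)$ into the falling factorial $\theta(\theta-1)\cdots(\theta-r+2)=\theta^{\underline{r-1}}$ acting on $f$; and the classical identity $\frac{d^{r-1}}{dy^{r-1}}=y^{-(r-1)}\theta^{\underline{r-1}}$. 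Combining these gives $\frac{1}{y}(-1)^{r-1}\theta(\theta+1)\cdots(\theta+r-2)\big(y^{2-r}f\big)=(-1)^{r-1}\frac{1}{y}\,y^{2-r}\theta^{\underline{r-1}}f=(-1)^{r-1}y^{1-r}\theta^{\underline{r-1}}f=(-1)^{r-1}\frac{d^{r-1}f}{dy^{r-1}}$, which is the claim.

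The computation is essentially mechanical once the conjugation trick $\big(\partial_p+\frac{v}{p}\big)g=p^{-v}\partial_p(p^{v}g)$ is in hand; the only genuinely delicate point is the bookkeeping that converts the rising factorial coming from $(-\theta)^{\underline{r-1}}$ back into the falling factorial $\theta^{\underline{r-1}}$ via the shift $a=2-r$, where an off-by-one in the index range would spoil the matching with $\frac{d^{r-1}}{dy^{r-1}}$. I would therefore verify the factor-by-factor correspondence of the shifted arguments $\{2-r,3-r,\dots,0\}\leftrightarrow\{0,-1,\dots,-(r-2)\}$ explicitly, and sanity-check the full identity for $r=1,2,3$.
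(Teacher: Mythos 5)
Your proof is correct, and it takes a genuinely different route from the paper's. The paper proves the inner evaluation $\big(\partial_p+\tfrac{v}{p}\big)^{r-1}\cdot 1=p^{1-r}\sum_{k}s_{r-1,k}v^k$ by induction on the Stirling recurrence (Lemma \ref{lem:stirling}), reduces the left-hand side to $\tfrac{1}{y}\sum_{k}s_{r-1,k}\big(-y\tfrac{d}{dy}\big)^k\big(y^{2-r}f\big)$ exactly as you do, and then runs a \emph{second} induction on $r$: it splits $s_{r,k}=s_{r-1,k-1}-(r-1)s_{r-1,k}$, applies the induction hypothesis with $f\mapsto f/y$, and closes with a Leibniz-rule identity. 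You eliminate both inductions: the conjugation $\big(\partial_p+\tfrac{v}{p}\big)g=p^{-v}\partial_p\big(p^vg\big)$ telescopes to the closed form $v^{\underline{r-1}}p^{1-r}$, and the finish is pure Euler-operator calculus --- $\sum_k s_{r-1,k}(-\theta)^k=(-\theta)^{\underline{r-1}}$, the shift rule $P(\theta)y^a=y^aP(\theta+a)$, and $\theta^{\underline{n}}=y^n\tfrac{d^n}{dy^n}$. I checked the sign $(-\theta)^{\underline{r-1}}=(-1)^{r-1}\theta(\theta+1)\cdots(\theta+r-2)$, the index bookkeeping under the shift $a=2-r$ (the factors $\theta+2-r,\dots,\theta$ are precisely those of $\theta^{\underline{r-1}}$), and the degenerate case $r=1$, where empty products are the identity: all correct. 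What your approach buys is brevity and transparency --- the Stirling numbers are exposed as coefficients of a falling factorial, so the collapse to a pure $(r-1)$-st $y$-derivative is structurally forced rather than verified term by term; what the paper's approach buys is self-containedness, since it never needs to interpret the symbol $p^v$. That is the one point you should make precise in a final write-up: $p^v$ is not a Laurent polynomial, so either state that the telescoping computation takes place in the free $\mathbb{C}[v][p,p^{-1}]$-module generated by the symbol $p^v$ with $\partial_p(p^{v+m})=(v+m)p^{v+m-1}$, or note that the resulting closed form $\big(\partial_p+\tfrac{v}{p}\big)^n\cdot 1=v^{\underline{n}}p^{-n}$ admits a one-line inductive check; with that remark your argument is complete.
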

First, we observe
\begin{lemma}\label{lem:stirling}
	Let $s_{r,k}$ be the signed Stirling number defined by $s_{r+1,k}=s_{r,k-1}-r s_{r,k}$ with $s_{k,k}=1$, $s_{r,k}=0$ for $k>r$ and $s_{r,0}=\delta_{r,0}$. Then, the following holds
	\begin{align*}
		\big(\partial_p+\frac{v}{p}\big)^{r}\cdot 1=\frac{\sum_{k=0}^rs_{r,k}v^k}{p^r}
	\end{align*}
	\begin{proof}
		We prove by induction, where the initial case is trivially true,
		\begin{align*}
			&\big(\partial_p+\frac{v}{p}\big)^{r+1}\cdot 1=\big(\partial_p+\frac{v}{p}\big)\frac{\sum_{k=0}^rs_{r,k}v^k}{p^r}
			=\frac{\sum_{k=0}^r(-r\cdot s_{r,k})v^k}{p^{r+1}}+\frac{\sum_{k=0}^r s_{r,k}v^{k+1}}{p^{r+1}}\\
			=&\frac{\sum_{k=0}^{r+1}(s_{r,k-1}-rs_{r,k})v^k}{p^{r+1}}=\frac{\sum_{k=0}^{r+1}s_{r+1,k}v^k}{p^{r+1}}.
		\end{align*}
	\end{proof}
\end{lemma}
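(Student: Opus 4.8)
The plan is to prove Proposition \ref{pro:equiv} by first applying Lemma \ref{lem:stirling} to eliminate the operator $(\partial_p+\frac{v}{p})^{r-1}$, then extracting the coefficient $[v^k]$ explicitly, and finally recognizing the resulting sum over $k$ as a closed-form expression that collapses to the pure derivative $(-1)^{r-1}\frac{d^{r-1}}{dy^{r-1}}f$. To lighten notation I would suppress the argument $z$ and write $x=x(z)$, $y=y(z)$, $p=xy$, and $\theta := -y\frac{d}{dy}$ for the scaling operator appearing in the statement.

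First I would substitute Lemma \ref{lem:stirling} with $r$ replaced by $r-1$: since $(\partial_p+\frac{v}{p})^{r-1}\cdot 1=p^{-(r-1)}\sum_{k=0}^{r-1}s_{r-1,k}v^k$, evaluating at $p=xy$ gives $(xy)^{-(r-1)}\sum_k s_{r-1,k}v^k$. Hence $[v^k]$ of this quantity is exactly $s_{r-1,k}(xy)^{-(r-1)}$, which is nonzero only for $0\le k\le r-1$. Multiplying by the prefactor $x^{r}$ from the statement turns $x^r(xy)^{-(r-1)}$ into $x\,y^{-(r-1)}$, so after substituting into the left-hand side the whole expression becomes
\begin{align*}
\frac{1}{y}\sum_{k=0}^{r-1} s_{r-1,k}\,\theta^{k}\!\left(\frac{y}{x}\cdot x\,y^{-(r-1)}f\right)
=\frac{1}{y}\sum_{k=0}^{r-1} s_{r-1,k}\,\theta^{k}\!\left(y^{-(r-2)}f\right).
\end{align*}
Here I have used that the factor $\frac{y}{x}$ standing to the right of $\theta^k$ combines with $x$ and $y^{-(r-1)}$ before the operator acts, since $\theta$ differentiates everything to its right.

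Next I would exploit the key algebraic fact that the signed Stirling numbers $s_{r-1,k}$ are precisely the coefficients that convert a sum of powers of the operator $\theta=-y\frac{d}{dy}$ into the falling-factorial operator: the defining recursion $s_{r,k}=s_{r-1,k-1}-(r-1)s_{r-1,k}$ is exactly the recursion satisfied by the coefficients in the identity $\sum_{k}s_{r-1,k}\,\theta^{k}=\theta(\theta+1)(\theta+2)\cdots(\theta+r-2)$. Equivalently, $\sum_{k=0}^{r-1}s_{r-1,k}\,\theta^k$ equals the falling factorial $\theta^{\underline{\,r-1\,}}$ in the operator $-\theta$, and applied to a monomial $y^{a}$ one has $\theta\,y^{a}=-a\,y^{a}$, so $\big(\sum_k s_{r-1,k}\theta^k\big)y^{a}=(-a)(-a-1)\cdots(-a-r+2)\,y^{a}$ when $f$ is constant; the general case follows by treating $f$ as a function and tracking the product rule. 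The clean way to package this is to verify directly by induction on $r$ that $\sum_{k=0}^{r-1}s_{r-1,k}\theta^k g = \theta(\theta+1)\cdots(\theta+r-2)\,g$ for any $g$, using the recursion of Lemma \ref{lem:stirling}, and then to compute $\theta(\theta+1)\cdots(\theta+r-2)\big(y^{-(r-2)}f\big)$.

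Finally I would carry out that computation. Writing $\theta=-y\frac{d}{dy}$, the shifted operators act as $(\theta+j)\big(y^{-(r-2)+j}h\big)=y^{-(r-2)+j}\cdot(-y)\frac{d}{dy}h=-y^{-(r-3)+j}\frac{d}{dy}h$ for any $h$, because the $+j$ exactly cancels the scaling weight produced by $y^{-(r-2)+j}$. Applying the operators in the telescoping order $(\theta+r-2)$ innermost through $\theta$ outermost, each factor strips off one power of $y$ and produces one honest derivative $-\frac{d}{dy}$, so that $\theta(\theta+1)\cdots(\theta+r-2)\big(y^{-(r-2)}f\big)=(-1)^{r-1}y\,\frac{d^{r-1}}{dy^{r-1}}f$. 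Dividing by the leading $\frac{1}{y}$ yields precisely $(-1)^{r-1}\frac{d^{r-1}}{dy^{r-1}}f$, as claimed. The main obstacle I anticipate is bookkeeping the telescoping cancellation correctly — making sure that at each stage the power of $y$ carried by the partially-reduced expression matches the shift in the operator so that every factor contributes exactly one clean derivative and no stray powers of $y$ survive; this is where a careful induction, rather than a one-line manipulation, is warranted.
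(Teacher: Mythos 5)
Your proposal does not prove the statement under review. The statement is Lemma \ref{lem:stirling} itself, i.e.\ the identity $\big(\partial_p+\frac{v}{p}\big)^{r}\cdot 1=p^{-r}\sum_{k=0}^{r}s_{r,k}v^{k}$. Your text explicitly sets out to prove Proposition \ref{pro:equiv} and invokes Lemma \ref{lem:stirling} as a known input (``substitute Lemma \ref{lem:stirling} with $r$ replaced by $r-1$''), so the lemma is nowhere established --- read as a proof of the lemma, the argument is circular. What is actually needed is short, and it is exactly what the paper does: assume the formula for $r$, apply $\big(\partial_p+\frac{v}{p}\big)$ to $p^{-r}\sum_{k}s_{r,k}v^{k}$, note that $\partial_p$ contributes $-r\,p^{-r-1}\sum_k s_{r,k}v^k$ while $\frac{v}{p}$ shifts $v^k\mapsto v^{k+1}$, and read off the coefficient of $v^k p^{-(r+1)}$ as $s_{r,k-1}-r\,s_{r,k}=s_{r+1,k}$, which is precisely the defining recursion.

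Even judged as a blind proof of Proposition \ref{pro:equiv}, your key identities are misstated. With $\theta:=-y\frac{d}{dy}$, the recursion $s_{r+1,k}=s_{r,k-1}-r\,s_{r,k}$ generates the \emph{falling} factorial, $\sum_{k=0}^{r-1}s_{r-1,k}\,\theta^{k}=\theta(\theta-1)\cdots(\theta-(r-2))$, not the rising factorial $\theta(\theta+1)\cdots(\theta+r-2)$ you assert; the rising factorial's coefficients are the \emph{unsigned} Stirling numbers, whose recursion carries a plus sign. Correspondingly your telescoping step fails as written: since $\theta(y^{a}h)=-a\,y^{a}h-y^{a+1}h'$, one gets
\begin{align*}
	\Big(\theta+j\Big)\Big(y^{-(r-2)+j}h\Big)=(r-2)\,y^{-(r-2)+j}h-y^{-(r-3)+j}\,h',
\end{align*}
so the scaling term does \emph{not} cancel for $r>2$. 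The correct telescoping is $(\theta-j)\big(y^{-j}h\big)=-y^{-(j-1)}h'$, applied with $(\theta-(r-2))$ innermost on $y^{-(r-2)}f$ and $\theta$ outermost, which yields $(-1)^{r-1}y\,\frac{d^{r-1}f}{dy^{r-1}}$ and, after the prefactor $\frac1y$, the claim of Proposition \ref{pro:equiv}. With these sign corrections your factorization argument would be a legitimate alternative to the paper's induction on $r$ in Proposition \ref{pro:equiv} --- but it would still leave Lemma \ref{lem:stirling}, the statement you were asked to prove, unproved.
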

Let us finish the proof of Proposition \ref{pro:equiv}:
\begin{proof}[Proof of Proposition \ref{pro:equiv}]
	Rewrite the lhs of Proposition \ref{pro:equiv} by Lemma \ref{lem:stirling} to
	\begin{align*}
		&\frac{1}{y(z)}\sum_{k\geq 0}\bigg(-y(z)\frac{d}{dy(z)}\bigg)^k \frac{y(z)}{x(z)}[v^k]\bigg(\partial_p+\frac{v}{p}\bigg)^{r-1}\cdot 1\vert_{p=x(z_i)y(z_i)}x(z_i)^{r}f(z)\\
		=&\frac{1}{y(z)}\sum_{k= 0}^{r-1}\bigg(-y(z)\frac{d}{dy(z)}\bigg)^k \frac{y(z)}{x(z)}\frac{s_{r-1,k}}{x(z)^{r-1}y(z)^{r-1}}x(z)^rf(z)\\
		=&\frac{1}{y(z)}\sum_{k= 0}^{r-1}\bigg(-y(z)\frac{d}{dy(z)}\bigg)^k \frac{s_{r-1,k}f(z)}{y(z)^{r-2}}.
	\end{align*}
	Now, we proceed by induction. The initial case for $r=1$ gives
	\begin{align*}
		\frac{1}{y(z)}\frac{s_{0,0}f(z)}{y(z)^{-1}}=f(z).
	\end{align*}
	Now take $r\mapsto r+1$, which yields for $r>0$
	\begin{align*}
		&\frac{1}{y(z)}\sum_{k= 0}^{r}\bigg(-y(z)\frac{d}{dy(z)}\bigg)^k \frac{s_{r,k}f(z)}{y(z)^{r-1}}\\
		=&\frac{1}{y(z)}\sum_{k= 0}^{r}\bigg(-y(z)\frac{d}{dy(z)}\bigg)^k \frac{(s_{r-1,k-1}-(r-1) s_{r-1,k})f(z)}{y(z)^{r-1}}\\
		=&\frac{1}{y(z)}\bigg(-y(z)\frac{d}{dy(z)}\bigg)\frac{y(z)}{y(z)}\sum_{k=0}^{r-1}\bigg(-y(z)\frac{d}{dy(z)}\bigg)^k\frac{s_{r-1,k}\frac{f(z)}{y(z)}}{y(z)^{r-2}}\\
		&-(r-1)\frac{1}{y(z)}\sum_{k=0}^{r-1}\bigg(-y(z)\frac{d}{dy(z)}\bigg)^k\frac{s_{r-1,k}\frac{f(z)}{y(z)}}{y(z)^{r-2}}\\
		=&(-1)^r\frac{d}{dy(z)}\bigg[y(z)\frac{d^{r-1}}{dy(z)^{r-1}}+(r-1)\frac{d^{r-2}}{dy(z)^{r-2}}\bigg]\frac{f(z)}{y(z)},
	\end{align*}
	where we have used the recursive formula of $s_{r,k}$, the induction hypothesis for $r$ with $f(z)\mapsto \frac{f(z)}{y(z)}$. The proof is finished by the identity
	\begin{align*}
		\bigg[y(z)\frac{d^{r-1}}{dy(z)^{r-1}}+(r-1)\frac{d^{r-2}}{dy(z)^{r-2}}\bigg]\frac{f(z)}{y(z)}=\frac{d^{r-1}}{dy(z)^{r-1}} f(z).
	\end{align*}
	which is nothing else than the Leibniz rule of $\frac{d^{r-1}}{dy(z)^{r-1}} \big[\big(\frac{f(z)}{y(z)}\big) y(z)\big]$.
\end{proof}
Proposition \ref{pro:equiv} gives an enormous simplification to the results of \cite{Borot:2021thu} in case of $g=0$. Following their definitions of generating series of moments and free cumulants (see \cite{Voiculescu1986AdditionOC,Collins2006SecondOF} for more appropriate details), we define for a random variable $a$ 
\begin{align*}
	M_n(X_1,...,X_n):=&\sum_{k_1,...,k_n\geq 1} \varphi_n[a^{k_1},...,a^{k_n}]\prod_{i=1}^n X_i^{k_i}\\
	C_n(Y_1,...,Y_n):=&\sum_{k_1,...,k_n\geq 1} \kappa_{k_1,...,k_n}[a,...,a]\prod_{i=1}^n Y_i^{k_i}.
\end{align*}
The following functional relation holds \cite{Voiculescu1986AdditionOC,Collins2006SecondOF}
\begin{align*}
	C_1(X M_1(X))=M(X),
\end{align*}
\begin{align*}
	M_2(X_1,X_2)+\frac{X_1X_2}{(X_1-X_2)^2}=\frac{d \ln Y_1}{d \ln X_1}\frac{d \ln Y_2}{d \ln X_2}\bigg(C_2(Y_1,Y_2)+\frac{Y_1Y_2}{(Y_1-Y_2)^2}\bigg).
\end{align*}
We conclude as a corollary the following simplification of \cite[Theorem 1.1]{Borot:2021thu}
\begin{corollary}\label{cor:free}
	Consider  $Y_i=X_i M_1(X_i)$. For $n\geq 3$, we have
	\begin{align*}
		&M_n(X_1,...,X_n)\cdot X_1\cdot ...\cdot X_n\\
		=&\sum_{T\in \mathcal{G}_{0,n}}\frac{1}{\mathrm{Aut}(T)}\prod_{i=1}^n\bigg(X_i^2\frac{d}{dX_i}\bigg)^{r_i(T)-1}\bigg[\prod_{k=1}^n\bigg(X_k^2\frac{dY_k}{dX_k}\bigg)\prod_{(\emptyset,J)\in \mathcal{I}(T)}^\prime \frac{C_{|J|}(Y_J)}{\prod_{j\in J} Y_j}\bigg],
	\end{align*}
where $\mathcal{G}_{n,m}$ is the set of trees defined in Definition \ref{def:tree} and $r_j(T)$ is the valence of the $j^{\text{th}}$ $\circ$-vertex, $\mathcal{I}(T)$ the set of double sets corresponding to the $\bullet$-vertices in $T$ defined in Definition \ref{def:tree} and the primed product $\prod^\prime$ replaces $C_2(Y_i,Y_j)$ by $C_2(Y_i,Y_j)+\frac{Y_iY_j}{(Y_i-Y_j)^2}$.
\begin{proof}
	Identify $\frac{W^{(0)}_{n,0}(\frac{1}{x_1},...,\frac{1}{x_n})}{x_1\cdot ...\cdot x_n}=M_n(x_1,...,x_n)+\frac{\delta_{2,n}x_1x_2}{(x_1-x_2)^2}$ and $W^{(0)}_{0,m}(y_1,...,y_m)=\frac{C_m(y_1,...,y_m)}{y_1\cdot ...\cdot y_m}+\frac{\delta_{2,m}}{(y_1-y_2)^2}$ and apply Theorem \ref{thm:first} with $x$ and $y$ interchanged. 
	
	On the other hand, the corollary can also be proved by Theorem 1.1 of \cite{Borot:2021thu} and the application of Proposition \ref{pro:equiv}.
\end{proof}
\end{corollary}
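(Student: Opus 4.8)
The plan is to obtain the corollary as a direct translation of Theorem \ref{thm:first} — applied with the roles of $x$ and $y$ interchanged — into the language of higher order free probability, using the dictionary relating the planar Topological Recursion correlators to the generating series $M_n$ and $C_m$. The natural starting point is the spectral curve governing the duality between ordinary and fully simple maps \cite{Borot:2021eif}, for which the two loop insertion operators commute, so that Assumption \ref{Ass}, and hence Theorem \ref{thm:first}, is available. On this curve I would record the identifications
\[
\frac{W^{(0)}_{n,0}(1/x_1,\dots,1/x_n)}{x_1\cdots x_n}=M_n(x_1,\dots,x_n)+\frac{\delta_{2,n}\,x_1x_2}{(x_1-x_2)^2},\qquad W^{(0)}_{0,m}(y_1,\dots,y_m)=\frac{C_m(y_1,\dots,y_m)}{y_1\cdots y_m}+\frac{\delta_{2,m}}{(y_1-y_2)^2}.
\]

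First I would fix the change of variables dictated by these identifications: the spectral $x$-variable is $1/X_i$ and the spectral $y$-variable is $Y_i$, while the $x$-$y$ functional inversion $y=Y(x)$ becomes precisely the relation $Y_i=X_iM_1(X_i)$. Under the substitution $\mathsf{x}=1/X$ one computes $-\tfrac{d}{d\mathsf{x}}=X^2\tfrac{d}{dX}$ and $-\tfrac{d\mathsf{y}}{d\mathsf{x}}=X^2\tfrac{dY}{dX}$, which are exactly the derivative operator and the prefactor appearing on the right-hand side of the corollary. Substituting these into the ($x\leftrightarrow y$)-interchanged form of Theorem \ref{thm:first} turns each $\circ$-vertex contribution $(-d/dy_j)^{r_j-1}$ into $(X_j^2\,d/dX_j)^{r_j-1}$ and each factor $-dx_k/dy_k$ into $X_k^2\,dY_k/dX_k$, with no further rearrangement of the tree sum needed.

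Second, I would account for the Bergman-kernel shift terms. The factor $W^{(0)}_{0,|J|}(Y_J)$ attached to each $\bullet$-vertex equals $C_{|J|}(Y_J)/\prod_{j\in J}Y_j$ except when $|J|=2$, where the dictionary adds $1/(Y_i-Y_j)^2$; this is exactly the replacement encoded by the primed product $\prod'$. On the left-hand side, $W^{(0)}_{n,0}(1/X_1,\dots,1/X_n)=X_1\cdots X_n\,(M_n+\delta_{2,n}\cdots)$, and for $n\ge3$ the shift vanishes, producing $M_n\cdot X_1\cdots X_n$. Collecting these pieces yields the stated formula.

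The main obstacle is not any single computation but the careful bookkeeping of the two families of shift terms: I must verify that the $\delta_{2,m}$ correction inside $W^{(0)}_{0,2}$ reproduces precisely the primed product and contributes nowhere else, and that the $\delta_{2,n}$ correction on the left is genuinely absent for $n\ge3$. A secondary point is to confirm that the chosen spectral curve indeed satisfies Assumption \ref{Ass}. As an independent cross-check — and an alternative route — I would instead start from \cite[Theorem 1.1]{Borot:2021thu} and apply Proposition \ref{pro:equiv} vertex by vertex: the Stirling-number identity collapses their prefactor $\tfrac{1}{y}\sum_k(-y\,d/dy)^k\cdots x^r$ into $(-1)^{r-1}d^{r-1}/dy^{r-1}$, after which the same change of variables $\mathsf{x}=1/X$, $\mathsf{y}=Y$ delivers the corollary, confirming that the two derivations agree.
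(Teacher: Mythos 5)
Your proposal is correct and follows essentially the same route as the paper: the same identifications of $W^{(0)}_{n,0}$ and $W^{(0)}_{0,m}$ with the (shifted) generating series $M_n$ and $C_m$, followed by applying Theorem \ref{thm:first} with $x$ and $y$ interchanged under the substitution $\mathsf{x}=1/X$, $\mathsf{y}=Y$, with the $\delta_{2,m}$ shift absorbed into the primed product. Your spelled-out change-of-variables computation and bookkeeping of the shift terms, as well as the cross-check via \cite[Theorem 1.1]{Borot:2021thu} and Proposition \ref{pro:equiv}, are exactly what the paper's terse proof leaves implicit.
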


\section{Higher Genus}
The benefit of the method used in this article is that it produces a very simple realisation of the functional relation between $W^{(g)}_{n,m}$. However, this is restricted to a fixed genus. For instance for genus $g=1$, we need the initial relation between $W^{(1)}_{1,0}$ and $W^{(1)}_{0,1}$. The functional relation for any genus are actually known \cite[Theorem 3.4]{Borot:2021thu}, but we expect that they should be simplified enormously by similar identities as in Proposition \ref{pro:equiv}.

We will underpin this assertion by the $g=1$ example. Recall $\omega^{(1)}_{1,0}$ is generated by TR \eqref{eq:trx} with $(\Sigma,x,y,B)$ and $\omega^{(1)}_{0,1}$ with $(\Sigma,y,x,B)$ as spectral curve. Then, the functional relation between those differential forms is well-known (see for instance \cite[Lemma C.1]{Eynard:2007kz})
\begin{align*}
	\omega^{(1)}_{1,0}(z)+\omega^{(1)}_{0,1}(z)=\frac{1}{24}d_z\bigg[\frac{1}{x'(z)y'(z)}\bigg(\frac{x''(z)y''(z)}{x'(z)y'(z)}+\frac{x''(z)^2}{x'(z)^2}-\frac{x'''(z)}{x'(z)}+\frac{y''(z)^2}{y'(z)^2}-\frac{y'''(z)}{y'(z)}\bigg)\bigg],
\end{align*}
where $d_zf(z)$ is the exterior derivative of some function $f(z)$, that is $f'(z)dz$.  Pulling this back to the $y$-plane by $y(z)=y$ and $x(z)=X(y)$ together with Fa\`a di Bruno's formula yields
\begin{align}\label{W11}
	W^{(1)}_{0,1}(y)=&-\frac{dX(y)}{dy}W^{(1)}_{1,0}(X(y))+\frac{1}{2}\frac{d}{dy}\bigg(\frac{dX(y)}{dy}\hat{W}^{(0)}_{2,0}(X(y),X(y))\bigg)
	-\frac{1}{24}\frac{d^3}{dy^3}\bigg(\frac{1}{\frac{dX(y)}{dy}}\bigg),
\end{align}
where $\hat{W}^{(0)}_{2,0}(x_1,x_2)=W^{(0)}_{2,0}(x_1,x_2)-\frac{1}{(x_1-x_2)^2}$ with a well-defined diagonal.

It is plausible that lower genus occur taken at the diagonal, for instance the second term on the rhs. However, there is even a further special term, the last term on the rhs of \eqref{W11}. Nevertheless, the loop insertion operator $\tilde{D}_{y_2}$ can be applied on the functional relation. By definition of the loop insertion operator, the result for $W^{(0)}_{2,1}$ on the diagonal and functional relation of $W^{(1)}_{1,1}$ (which is also easily computed), one can prove
\begin{proposition}\label{Prop:W111}
	The following functional relation holds
	\begin{align*}
		W^{(1)}_{0,2}(y_1,y_2)=&\frac{dX(y_1)}{dy_1}\frac{dX(y_2)}{dy_2}W^{(1)}_{2,0}(X(y_1),X(y_2))\\
		&-\frac{d}{dy_1}\bigg(\frac{dX(y_1)}{dy_1}\frac{dX(y_2)}{dy_2}W^{(1)}_{1,0}(X(y_1))W^{(0)}_{2,0}(X(y_1),X(y_2))\bigg)\\
		&-\frac{d}{dy_2}\bigg(\frac{dX(y_1)}{dy_1}\frac{dX(y_2)}{dy_2}W^{(1)}_{1,0}(X(y_2))W^{(0)}_{2,0}(X(y_1),X(y_2))\bigg)\\
		&-\frac{1}{2}\frac{d}{dy_1}\bigg(\frac{dX(y_1)}{dy_1}\frac{dX(y_2)}{dy_2}W^{(0)}_{3,0}(X(y_1),X(y_1),X(y_2))\bigg)\\
		&-\frac{1}{2}\frac{d}{dy_2}\bigg(\frac{dX(y_1)}{dy_1}\frac{dX(y_2)}{dy_2}W^{(0)}_{3,0}(X(y_2),X(y_2),X(y_1))\bigg)\\
		&+\frac{1}{2}\frac{d^2}{dy_1^2}\bigg(\frac{dX(y_1)}{dy_1}\frac{dX(y_2)}{dy_2}\hat{W}^{(0)}_{2,0}(X(y_1),X(y_1))W^{(0)}_{2,0}(X(y_1),X(y_2))\bigg)\\
		&+\frac{1}{2}\frac{d^2}{dy_2^2}\bigg(\frac{dX(y_1)}{dy_1}\frac{dX(y_2)}{dy_2}\hat{W}^{(0)}_{2,0}(X(y_2),X(y_2))W^{(0)}_{2,0}(X(y_1),X(y_2))\bigg)\\
		&+\frac{1}{2}\frac{d^2}{dy_1dy_2}\bigg(\frac{dX(y_1)}{dy_1}\frac{dX(y_2)}{dy_2}W^{(0)}_{2,0}(X(y_1),X(y_2))W^{(0)}_{2,0}(X(y_1),X(y_2))\bigg)\\
		&+\frac{1}{24}\frac{d^3}{dy_1^3}\bigg[\bigg(\frac{dy_1}{dX(y_1)}\bigg)^2\frac{d}{dy_1}\bigg(\frac{dX(y_1)}{dy_1}\frac{dX(y_2)}{dy_2}W^{(0)}_{2,0}(X(y_1),X(y_2))\bigg)\bigg]\\
		&+\frac{1}{24}\frac{d^3}{dy_2^3}\bigg[\bigg(\frac{dy_2}{dX(y_2)}\bigg)^2\frac{d}{dy_2}\bigg(\frac{dX(y_1)}{dy_1}\frac{dX(y_2)}{dy_2}W^{(0)}_{2,0}(X(y_1),X(y_2))\bigg)\bigg].
	\end{align*}
\end{proposition}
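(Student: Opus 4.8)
The plan is to produce $W^{(1)}_{0,2}$ by applying the loop insertion operator $\tilde{D}_{y_2}$ to the known genus-one initial relation \eqref{W11} for $W^{(1)}_{0,1}$, exactly in the spirit of the genus-zero inductions in Lemma \ref{lem:W11} and Proposition \ref{prop:Wn1}. Writing $y=y_1$ in \eqref{W11} and acting with $\tilde{D}_{y_2}$, the left-hand side becomes $W^{(1)}_{0,2}(y_1,y_2)$ by Definition \ref{def:Wnm}. On the right-hand side, property \eqref{Dxx} ensures that $\tilde{D}_{y_2}$ commutes with every $\frac{d}{dy_1}$ in \eqref{W11} (since $\tilde{D}_{y_2}y_1=0$), so I may push $\tilde{D}_{y_2}$ through the outer $y_1$-derivatives and evaluate its action term by term using the Leibniz and chain rules.

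Three elementary actions are needed. First, $\tilde{D}_{y_2}X(y_1)=\tilde{D}_{y_2}W^{(0)}_{0,1}(y_1)=W^{(0)}_{0,2}(y_1,y_2)$, which I rewrite via $\omega^{(0)}_{2,0}=\omega^{(0)}_{0,2}$ as $\frac{dX(y_1)}{dy_1}\frac{dX(y_2)}{dy_2}W^{(0)}_{2,0}(X(y_1),X(y_2))$; differentiating gives $\tilde{D}_{y_2}\frac{dX(y_1)}{dy_1}=\frac{d}{dy_1}W^{(0)}_{0,2}(y_1,y_2)$. Second, on the genus-zero two-point function hidden in the diagonal term I use $\tilde{D}_{y_2}W^{(0)}_{2,0}(x_1,x_2)=W^{(0)}_{2,1}(x_1,x_2|y_2)$ (by Assumption \ref{Ass}) together with the established genus-zero relation for $W^{(0)}_{2,1}$ from the Example after Definition \ref{def:Wnm}, specialised to the diagonal $x_1=x_2=X(y_1)$; this is the origin of the $W^{(0)}_{3,0}$ and the quadratic $W^{(0)}_{2,0}$ contributions. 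Third, on the term $-\frac{dX(y_1)}{dy_1}W^{(1)}_{1,0}(X(y_1))$ the chain rule produces the direct correlator $\tilde{D}_{y_2}W^{(1)}_{1,0}(X(y_1))=W^{(1)}_{1,1}(X(y_1)|y_2)$, so I first record the functional relation for $W^{(1)}_{1,1}$. The latter is itself obtained by applying $D_{x}$ to \eqref{W11} and inserting $D_xX(y)=W^{(0)}_{1,1}(x|y)=-\frac{dX(y)}{dy}W^{(0)}_{2,0}(x,X(y))$ from Lemma \ref{lem:W11}; it carries a genus-one piece $-\frac{dX(y)}{dy}W^{(1)}_{2,0}(x,X(y))$ and a separate $\tfrac{1}{24}$ piece in the variable $y$.

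Assembling the contributions reproduces each displayed line. The term $-\frac{1}{24}\frac{d^3}{dy_1^3}\big(\frac{dy_1}{dX(y_1)}\big)$ of \eqref{W11}, via $\tilde{D}_{y_2}\big(\frac{dy_1}{dX(y_1)}\big)=-\big(\frac{dy_1}{dX(y_1)}\big)^2\frac{d}{dy_1}W^{(0)}_{0,2}(y_1,y_2)$, yields precisely the $y_1$-labelled $\tfrac{1}{24}$ line; the symmetric $y_2$-labelled line does \emph{not} come from here but is generated by the $\tfrac{1}{24}$-part of $W^{(1)}_{1,1}(X(y_1)|y_2)$. The genus-one terms combine into $W^{(1)}_{2,0}$ plus the two $W^{(1)}_{1,0}W^{(0)}_{2,0}$ lines, and the action on the diagonal term unfolds into the two $\tfrac12$-weighted $W^{(0)}_{3,0}$ lines together with the three lines quadratic in $W^{(0)}_{2,0}$ (two carrying $\hat{W}^{(0)}_{2,0}$ on the diagonal and one mixed $\frac{d^2}{dy_1dy_2}$ term). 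The main obstacle is organisational rather than conceptual: carefully tracking the diagonal ``hat'' subtraction in $\hat{W}^{(0)}_{2,0}(X(y_1),X(y_1))$ under repeated differentiation, and above all verifying that the manifestly $y_1$-biased output of acting on \eqref{W11} reassembles into the $y_1\leftrightarrow y_2$ symmetric right-hand side claimed, with the two $\tfrac{1}{24}$ lines and the symmetric genus-one lines serving as the precise bookkeeping that restores this symmetry. A final repeated chain-rule (Fa\`a di Bruno) simplification of the $\tfrac{1}{24}$ contributions confirms the two third-derivative lines in the stated form.
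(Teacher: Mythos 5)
Your proposal is correct and follows essentially the same route as the paper: the paper's (sketched) proof likewise applies $\tilde{D}_{y_2}$ to the genus-one initial relation \eqref{W11}, invoking the definition of the loop insertion operator, the relation for $W^{(0)}_{2,1}$ specialised to the diagonal, and the separately computed functional relation for $W^{(1)}_{1,1}$. Your observation that the $y_2$-labelled $\tfrac{1}{24}$ and genus-one lines arise from the corresponding pieces of $W^{(1)}_{1,1}(X(y_1)|y_2)$, thereby restoring the $y_1\leftrightarrow y_2$ symmetry of the right-hand side, is precisely the bookkeeping the paper leaves implicit.
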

The last two lines of Proposition \ref{Prop:W111} give a new arising structure. All the other terms are in principle constructable by extending the definition of trees in Definition \ref{def:tree} to more general decorated graphs. To give results in $g>1$, the initial functional relation between  $W^{(g)}_{1,0}$ and $W^{(g)}_{0,1}$ has to be known, which is not achievable by the loop insertion operator.

\newcommand{\etalchar}[1]{$^{#1}$}

\end{document}